\begin{document}

\newtheorem{theorem}{Theorem}[section]
\newtheorem{lemma}{Lemma}[section]
\newtheorem{corollary}{Corollary}[section]
\newtheorem{claim}{Claim}[section]
\newtheorem{proposition}{Proposition}[section]
\newtheorem{definition}{Definition}[section]
\newtheorem{fact}{Fact}[section]
\newtheorem{example}{Example}[section]

\newcommand{\quod}{\hfill $\blacksquare$ \bigbreak}
\newcommand{\reals}{I\!\!R}
\newcommand{\property}{I\!\!P}
\newcommand{\np}{\mbox{{\sc NP}}}
\newcommand{\sing}{\mbox{{\sc Sing}}}
\newcommand{\con}{\mbox{{\sc Con}}}
\newcommand{\prob}{\mbox{Prob}}
\newcommand{\atm}{\mbox{{\sc ATM}}}
\newcommand{\hopn}{\hop_{\cN}}
\newcommand{\atmn}{\atm_{\cN}}
\newcommand{\cA}{{\cal A}}
\newcommand{\cO}{{\cal O}}
\newcommand{\cP}{{\cal P}}
\newcommand{\cC}{{\cal C}}
\newcommand{\C}{{\cal C}}
\newcommand{\cB}{{\cal B}}
\newcommand{\cG}{{\cal G}}
\newcommand{\cN}{{\cal N}}
\newcommand{\cU}{{\cal U}}
\newcommand{\cF}{{\cal F}}
\newcommand{\cT}{{\cal T}}
\newcommand{\hx}{\hat{x}}
\newcommand{\cS}{{\cal S}}

\newcommand{\EE}{$\epsilon$-\-en\-ve\-lope}
\newcommand{\EQ}{$\epsilon$-\-con\-quer}
\newcommand{\ED}{{\tt En\-ve\-lo\-pe-Dis\-co\-ve\-ry}}
\newcommand{\SC}{{\tt Safe\--Con\-quer}}
\newcommand{\CE}{{\tt Con\-fir\-med-\-E\-cho}}
\newcommand{\CT}{{\tt Co\-lor\&\-Tran\-smit}}
\newcommand{\AC}{{\tt As\-sign-\-Co\-lor}}
\newcommand{\CD}{{\tt Con\-fli\-ct-\-De\-te\-ction}}
\newcommand{\CR}{{\tt Con\-fli\-ct-\-Re\-so\-lu\-tion}}
\newcommand{\SGR}{{\tt Shifted Grid-Refinement}}
\newcommand{\ERRD}{{\tt Error-Detection}}
\newcommand{\WGR}{{\tt Witnessed Grid-Refinement}}

\newcommand{\DB}{$\Delta$-block}
\newcommand{\DBs}{$\Delta$-blocks}
\newcommand{\FDB}{$5\Delta$-block}
\newcommand{\FDBs}{$5\Delta$-blocks}

\newcommand{\UB}{{\tt Universal Broadcast}}
\newcommand{\CAB}{{\tt Company-Aware Broadcast}}
\newcommand{\DI}{{\tt Dense-1}}
\newcommand{\DII}{{\tt Dense-2}}

\newcommand{\MS}{\mathcal{S}}

\newcommand{\eps}{{\epsilon}}
\newcommand{\la}{{\lambda}}
\newcommand{\al}{{\alpha}}
\newcommand{\qed}{\hfill $\square$ \smallbreak}

\newcommand{\UDGI}{{\tt UDG1}}
\newcommand{\UDGII}{{\tt UDG2}}
\newcommand{\SYM}{{\tt SYM}}

\newenvironment{proof}{\noindent{\bf Proof:}}{\qed}

\def\lalto{\left \lceil}
\def\ralto{\right \rceil}
\def\lbasso{\left \lfloor}
\def\rbasso{\right \rfloor}
\def\D{{\Delta}}
\def\qed{\hfill$\Box$}

\baselineskip    0.19in
\parskip         0.1in
\parindent       0.0in

\bibliographystyle{plain}
\title{{\bf Cost vs. Information Tradeoffs\\ for Treasure Hunt in the Plane }}
\author{
Andrzej Pelc \footnotemark[1] \footnotemark[2]
\and
Ram Narayan Yadav
}
\date{ }
\maketitle
\def\thefootnote{\fnsymbol{footnote}}

\footnotetext[1]{ \noindent
D\'epartement d'informatique, Universit\'e du Qu\'ebec en Outaouais, Gatineau,
Qu\'ebec J8X 3X7, Canada.
E-mails: {\tt pelc@uqo.ca}, {\tt narayanram1988@gmail.com} 
}
\footnotetext[2]{ \noindent  
Partially supported by NSERC discovery grant 2018-03899 and     
by the Research Chair in Distributed Computing at the
Universit\'e du Qu\'{e}bec en Outaouais. 
}

\begin{abstract}
A mobile agent has to find an inert treasure hidden in the plane. Both the agent and the treasure are modeled as points. This is a variant of the task known as treasure hunt.
The treasure is at a distance at most $D$ from the initial position of the agent, and the agent finds the treasure when it gets at distance $r$ from it, called the {\em vision radius}. However, the agent does not know the location of the treasure and does not know the parameters $D$ and $r$. The cost of finding the treasure is the length of the trajectory of the agent. We investigate the tradeoffs between the amount of information held {\em a priori}
by the agent and the cost of treasure hunt.
Following the well-established paradigm of {\em algorithms with advice}, this information is given to the agent in advance as a binary string, by an oracle cooperating with the agent and knowing 
the location of the treasure and the initial position of the agent. The size of advice given to the agent is the length of this binary string. 

For any size $z$ of advice and any $D$ and $r$, let $OPT(z,D,r)$ be the optimal cost of finding the treasure for parameters $z$, $D$ and $r$, if the agent has only an advice string of length $z$ as input.
We design treasure hunt algorithms working with advice of size $z$ at cost $O(OPT(z,D,r))$ whenever $r\leq 1$ or $r\geq 0.9D$. For intermediate values of $r$, i.e., $1<r<0.9D$, we design an almost optimal scheme of algorithms: for any constant $\alpha>0$, the treasure can be found at cost $O(OPT(z,D,r)^{1+\alpha})$.

\vspace*{1cm}

\noindent {\bf keywords:} mobile agent, treasure hunt, plane, advice.

\vspace*{2cm}

\end{abstract}

\thispagestyle{empty}

\pagebreak

\section{Introduction}

\subsection{The background and the problem}
Treasure hunt is the task of finding an inert target by a mobile agent in an unknown environment. In applications, the environment can be a communication network or a terrain,
and the agent can be a software agent looking for a piece of data in the first type of applications, or a mobile robot looking for an object, in the second case.
We consider treasure hunt in the plane, so the mobile agent may be thought of as a mobile robot or low-flying drone looking for a military target or for a lost person.
We assume that the target (treasure) is located at distance at most $D$ from the initial position of the agent, and that the agent finds the treasure when it gets at some distance $r>0$, called the {\em vision radius}. The agent does not know the parameters $D$ and $r$.  While the ignorance of $D$ is simply due to the fact that the treasure can be anywhere in the plane, the ignorance of $r$ can be due to physical conditions that affect the vision of the camera available to the robot in an unpredictable way: e.g., the density of the fog or of the grass in a savannah.

The mobile agent (robot) is modeled as a point moving along a polygonal line in the plane. It is equipped with a compass and a unit of length, and we assume that it has unbounded memory: from the computational point of view the agent is a Turing machine. Since the agent cannot learn anything during the execution of a treasure hunt algorithm until it sees the treasure, such an algorithm is simply a sequence of instructions of the type `` go at distance $x$ in direction $d$'',
without any conditional statements. 
The cost of a treasure hunt algorithm is the length of the trajectory of the agent from its initial position until it sees the treasure.

We investigate the tradeoffs between the amount of information held {\em a priori}
by the agent and the cost of treasure hunt.
Following the well-established paradigm of {\em algorithms with advice} (see the subsection ``Related work''), this information is given to the agent in advance as a binary string, by an oracle cooperating with the agent and knowing 
the location of the treasure and the initial position of the agent. The size of advice given to the agent is the length of this binary string. 
For a given size $z$ of advice, we want to find a treasure hunt algorithm of lowest possible cost, among algorithms using advice of this size.

Coming back to our application concerning finding a lost person, the size of advice of treasure hunt in the plane may be crucial. The lost person may have a GPS and hence may know their location in the plane. Also they know the position of the base. How to text little information to the rescuing team to allow a robot to reach the lost person fast? Time obviously matters, and both transmitting information and the travel of the robot take time, so it is important to know the tradeoffs. Also the transmitting device of the lost person may have little energy, so they may be able to transmit only a limited amount of information.

\subsection{Our results}

For any size $z$ of advice and any $D$ and $r$, let $OPT(z,D,r)$ be the optimal cost of finding the treasure for parameters $z$, $D$ and $r$, if the agent has only an advice string of length $z$ as input.
We design treasure hunt algorithms working with advice of size $z$ at cost $O(OPT(z,D,r))$ whenever $r\leq 1$ or $r\geq 0.9D$, i.e., in these ranges of vision radius our treasure hunt is optimal (up to multiplicative constants).
In the first range (for small vision radius), the cost of our algorithm is $O(D+\frac{D^2}{2^zr}(\log D + \log 1/r))$, while in the second range (for large vision radius) the cost of our algorithm is $O(D-r)$, hence it does not depend on the size of advice (and actually it does not require any advice in this case).
For intermediate values of $r$ (medium vision radius), i.e., $1<r<0.9D$, we design an almost optimal scheme of algorithms: for any constant $\alpha>0$, the treasure can be found at cost $O(OPT(z,D,r)^{1+\alpha})$. In this range, we design treasure hunt algorithms of cost $O((D+\frac{D^2}{2^zr})D^{\alpha})$, for any constant $\alpha>0$.
We then ``merge'' the above three solutions into one universal treasure hunt algorithm not knowing in which range of parameters it operates.

The above complexities show that, roughly speaking, the cost of treasure hunt in the plane decreases exponentially with growing size of advice, but this trade-off stops at the extremes: for small and medium vision range, cost $\Theta(D)$ cannot be beaten, even for very large advice, and for large vision radius, the cost is independent of advice and optimal treasure hunt does not use it. Our breaking points between the three ranges of vision radius are somewhat arbitrary, as  our algorithms
and lower bounds for small and medium vision radius have the same complexity for any constant $r$, and our algorithms and lower bounds for medium and large vision radius have the same complexity for any $r\in \Theta(D)$. The particular breaking points were chosen for convenience of analysis. We made no attempt at optimizing the multiplicative constants in the analysis.

The main difficulty of efficient treasure hunt in the plane, with little or no knowledge, lies in the need of reconciling two requirements: the agent should search
sufficiently far to reach the target at any, even very large distance, and at the same time, it should search sufficiently densely to find it even for small vision radius. If both $D$ and $r$ were known, this would not be hard to do because both the density of search and the limits of it would be fixed. With one of these parameters known, this  would not be hard either because the unknown parameter could be ``tried'' by doubling, and efficiency would follow from the telescopic effect. The real difficulty comes when neither $D$ nor $r$ are known because then some mechanism of systematic sweep of all possible hypotheses concerning couples $(D,r)$ is needed. This mechanism is easier in the case of small $r$ ($r\leq 1$), and was, in fact, designed in \cite{Pe} in the special scenario of no knowledge on the part of the agent. For arbitrary size of advice, in the case of small $r$, the difficulties are mostly geometric, but hypotheses concerning couples $(D,r)$ can still be sweeped  in the order of  growing diagonals of an infinite matrix, with growing possible values of $D$ in one dimension and decreasing possible values of $r$ in the other.
For large $r$ ($r\geq 0.9D$), this probing is also relatively simple because both parameters do not differ much and hence they can be tried simultaneously. In this range the algorithm is quite simple and the challenge is the geometric analysis of its complexity. However, the real difficulty comes in the intermediate range ($1<r<0.9D$). In this case, we will use a complex schedule of trying hypotheses concerning couples  $(D,r)$, in order to design our almost optimal scheme of treasure hunt algorithms.

\subsection{Related work}

{\bf Treasure hunt.}
The task of searching for a target (treasure) by mobile agents was investigated under various scenarios.
The environment where the target is hidden may be a graph or a plane, and the search may be deterministic or randomized.
The book \cite{AG} surveys both the search for a fixed target and the related rendezvous problem, where the target and the searching agent are both mobile and
they cooperate to meet. This book is concerned mostly with randomized search strategies. In \cite{MP,TSZ}, the authors studied relations between treasure hunt (searching for a fixed target) and rendezvous in graphs.  The authors of \cite{BCR} studied the task of finding a fixed point on the line and in the grid, and initiated the study of the task
of searching for an unknown line in the plane. This line of research was continued, e.g., in \cite{JL,La2}. In \cite{SF}, the authors concentrated on game-theoretic aspects of
the scenario where multiple selfish pursuers compete to find a target, e.g., in a ring. The main result of \cite{La} is an optimal algorithm to sweep a plane in order to locate an unknown fixed target, where locating means to get the agent originating at point $O$ to a point $P$ such that the target is in the segment $OP$. In \cite{FHGTM}, the authors considered the generalization of the search problem in the plane to the case of several searchers.  Efficient search for a fixed or a moving target in the plane, under complete ignorance on the part of the searching agent, was studied in \cite{Pe}. Hence the fixed target part of 
\cite{Pe} corresponds to our current problem for the special case of advice of size 0. However, while the results of \cite{Pe} are stated for any vision radius $r>0$, it was tacitly assumed that $r\leq 1$, and,
as explained in section 2, these results do not hold for arbitrary $r>0$.

{\bf Algorithms with advice.}
The paradigm of algorithms with advice was used predominantly for tasks in graphs.
Providing arbitrary items of information that can be used to increase efficiency of solutions to network problems 
 has been
proposed in \cite{AKM01,DP,EFKR,FGIP,FIP1,FIP2,FKL,FP,FPR,GPPR02,IKP,KKKP02,KKP05,MP,SN,TZ05}. This approach was referred to as
{\em algorithms with advice}.  
The advice, in the form of an arbitrary binary string, is given by a cooperating omniscient oracle either to the nodes of the network or to mobile agents performing some task in it.
In the first case, instead of advice, the term {\em informative labeling schemes} is sometimes used, if different nodes can get different information.

Several authors studied the minimum size of advice required to solve
network problems in an efficient way. 
In \cite{FIP1}, the authors compared the minimum size of advice required to
solve two information dissemination problems using a linear number of messages. 
In \cite{FKL}, it was shown that advice of constant size given to the nodes enables the distributed construction of a minimum
spanning tree in logarithmic time. 
In \cite{DKM,EFKR}, the advice paradigm was used for online problems.
In \cite{FGIP}, the authors established lower bounds on the size of advice 
needed to beat time $\Theta(\log^*n)$
for 3-coloring cycles and to achieve time $\Theta(\log^*n)$ for 3-coloring unoriented trees.  
In the case of \cite{SN}, the issue was not efficiency but feasibility: it
was shown that $\Theta(n\log n)$ is the minimum size of advice
required to perform monotone connected graph clearing.
In \cite{IKP}, the authors studied radio networks for
which it is possible to perform centralized broadcasting in constant time. They proved that constant time is achievable with
$O(n)$ bits of advice in such networks, while
$o(n)$ bits are not enough. In \cite{FPR}, the authors studied the problem of topology recognition in networks,  with advice given to the nodes. 
In \cite{DP}, the task of drawing an isomorphic map by an agent in a graph was considered, and the problem was to determine the minimum advice that has to be given to the agent for the task to be feasible. 
 Leader election with advice was studied in \cite{GMP} for trees, and in \cite{DiPe} for arbitrary graphs.
Graph exploration with advice was studied in \cite{BFU,GP} and treasure hunt with advice in graph environments was investigated in \cite{KKKS,MP}.
In a recent paper \cite{PY} we studied the size of advice sufficient to find a treasure in a geometric terrain with obstacles, at cost of optimal order of magnitude, where the vision radius of the agent is fixed to 1.

\section{Preliminaries}

We start with a correction concerning the results from \cite{Pe}, where treasure hunt in the plane was studied assuming that the agent has no {\em a priori} knowledge whatsoever, i.e., in our terms, treasure hunt with advice of size 0.
While the results of \cite{Pe} are stated for any vision radius $r>0$, it was tacitly assumed that $r\leq 1$, in which case they are valid. However,
as we will show below, these results do not hold for arbitrary $r>0$.

We first briefly recall the Algorithm  {\tt Static} from \cite{Pe}. 
The algorithm produces a trajectory of the mobile agent which is a polygonal line whose segments are parallel to the cardinal directions. 
For any positive real $x$, the instruction $(N,x)$ (resp. $(E,x)$, $(S,x)$ , and $(W,x)$) has the meaning ``go North (resp. East, South, and West) at distance $x$''. Juxtaposition is used for concatenation
of trajectories, and $\overline{T}$ denotes the trajectory reverse with respect to trajectory $T$. For any positive real $y$, let $Q(y)$ denote the square with side $y$ centered at the starting point of the
mobile agent. 

For any positve integers $k$ and $j$, the {\em spiral} $S(k,j)$ is the trajectory resulting from the following sequence of instructions:
$(E,2^{-j})$, $(S,2^{-j})$, $(W,2\cdot 2^{-j})$, $(N,2 \cdot 2^{-j})$, $(E,3\cdot 2^{-j})$, $(S,3 \cdot 2^{-j})$, $(W,4\cdot 2^{-j})$, $(N,4 \cdot 2^{-j})$, ..., $(E,(2k+1)\cdot 2^{-j})$, $(S,(2k+1) \cdot 2^{-j})$, $(W,(2k+2)\cdot 2^{-j})$, $(N,(2k+2) \cdot 2^{-j})$. Note that, during the traversal of the spiral $S(k,j)$, the mobile agent  gets at distance less than $2^{-j}$ from every point of the square $Q(2k\cdot 2^{-j})$.
Denote by  $\Pi(k,j)$ the trajectory $S(k,j)\overline{S(k,j)}$. 

Consider the infinite matrix $A$ whose rows are numbered by consecutive positive integers and whose columns are numbered by consecutive positive even integers. The term $A(i,j)$ in row $i$ and column $j$ is the trajectory $\Pi(2^{i+j},j)$. For any positive integer $i$, denote by $\Delta[i]$ the concatenation $\Pi(2^{i+2},2)\Pi(2^{i+3},4),\dots \Pi(2^{1+2i},2i)$ of trajectories in the $i$th diagonal of the matrix.
Algorithm {\tt Static} from \cite{Pe} is formulated as follows:
Follow the trajectory $\Delta[1]\Delta[2]\Delta[3]\dots$ until seeing the target.

Theorem 2.1 from \cite{Pe} states that 
the cost of Algorithm {\tt Static} is $O((\log D + \log \frac{1}{r}) D^2/r)$, where $D$ is an upper bound on the initial distance of the agent from the target and $r$ is the vision radius (called the sensing distance
in \cite{Pe}).
However, consider this algorithm for $D=2^a$ and $r=D/4$, where the treasure is at distance exactly $D$ from the initial position of the agent. Executing the algorithm, the agent sees the target while following the trajectory $\Delta[a]$. This means that it traversed the entire
trajectory $\Delta[a-1]$. This trajectory contains the trajectory $\Pi(2^{a+1},2)$ of length $2(2\cdot2^{a+1}+2)(2\cdot2^{a+1}+3)\cdot 2^{-2}\geq 2\cdot(2^{a+1})^2 = 8D^2$ and hence the cost of the algorithm is $\Omega(D^2)$ in this case. This contradicts the statement of Theorem 2.1 from  \cite{Pe} because for $D=2^a$ and $r=D/4$ we have $O((\log D + \log \frac{1}{r}) D^2/r)=O((\log \frac{D}{r})D^2/r)=O(D)$.

On the other hand, Theorem 2.2 from \cite{Pe} states that 
the cost of any treasure hunt algorithm with unknown bound $D$ on the initial distance and unknown vision radius $r$, is at least 
 $\frac{1}{16}((\log D + \log \frac{1}{r}) D^2/r)$, for some couple of parameters $D$ and $r$, for which this value is arbitrarily large.
 This in turn is refuted for couples of parameters $D$ and $r=D-\log D$ by Theorem \ref{large vision} from the present paper. Indeed, it follows from this theorem that treasure hunt (with no advice) can be accomplished for such parameters
 at cost $O(D-r)=O(\log D)$, while the  lower bound $\frac{1}{16}((\log D + \log \frac{1}{r}) D^2/r)$ is $\Omega(D)$ in this case.
 
 While for $r\leq1$ the results from \cite{Pe} are valid and our present results (presented in Section 4) generalize them for  arbitrary size of advice, the above comments show that the results from  \cite{Pe} do not remain valid for arbitrary $r>1$. The results of Sections 5 and 6 of the present paper, applied in the special case of $z=0$ (i.e., with no advice) can serve as a correction of \cite{Pe} for $r>1$: they give an almost optimal treasure hunt algorithm of cost $O((D+\frac{D^2}{r})D^{\alpha})$, for any fixed $\alpha>0$, in the range $1<r<0.9D$, and an optimal treasure hunt algorithm of cost $O(D-r)$ in the range $r \geq 0.9D$.
 
 Since for $r \geq D$, the agent can see the treasure from its initial position, without making any move, we assume throughout the paper that $r<D$.
 
 We will use the following terminology. The initial position of the agent is called $P$. The direction North-South is called {\em vertical} and the direction East-West is called {\em horizontal}. We will use the
 notion of {\em tiling}. This is a partition of the plane into squares of the same side length, called {\em tiles}, with all sides parallel or perpendicular to a given line $L$, and such that $P$ is a corner of one of the tiles. In order to make this a partition of the plane, we assume that each tile contains its two adjacent sides in a consistent way: in case of tiles with sides vertical and horizontal, these are the North and the East side of every tile. The side length of tiles is called the {\em size} of the tiling.
 
 For convenience, all our treasure hunt algorithms are formulated as infinite sequences of prescribed moves. It is understood that the algorithm is interrupted,
 i.e., the agent stops, as soon as it gets at the unknown distance $r$ from the treasure, at which time it sees the treasure. 
 
 \section{The advice and the basic traversal}

We first describe the advice given by the oracle that knows the initial position $P$ of the agent and the location $Q$ of the treasure, and that has $z$ available bits, where $z$ is a positive integer. (If $z=0$, no advice is given). The same advice will be used in all our algorithms. We call it the {\em canonical advice of size} $z$.
The oracle divides the plane into $2^z$ sectors, each with angle $2\pi/2^z$, using half-lines starting at $P$, one of which is in the direction North. Each sector consists of points between two consecutive half-lines $L$ and $L'$, where $L'$ is clockwise from $L$, including $L$ and excluding $L'$. The sector corresponding to lines $L$ and $L'$, where $L'$ forms the angle $i2\pi/2^z$ with direction North and $L$ forms the angle $(i+1)2\pi/2^z$ with direction North, for $0 \leq i  \leq 2^z-1$, is called the $i$-th sector. Angles are counted counterclockwise from the direction North.  
Let $0 \leq j  \leq 2^z-1$ be the number of the sector containing point $Q$. Let $w$ be the string of bits defined as the binary representation of $j$  padded by a prefix of $z-\lceil \log j \rceil$ zeroes. For example, if $z=4$ and $j=5$, the string $w$ is $(0101)$. The oracle gives the string $w$ to the agent.

Given the advice $w$, the agent decodes it as follows. It divides the plane into $2^z$ sectors, as described above, where $z$ is the length of $w$. Then it finds the integer $j$, whose binary representation is $w$. Finally it computes the $j$-th sector $S$ containing the location $Q$ of the treasure.

Suppose that  $z\geq 2$.
For given positive reals $D$ and $r$, such that $D > r$, and for a given sector  $S$ corresponding to half-lines $L$ and $L'$, where $L'$ is clockwise from $L$, 
we define the set of points $S^*$ which the intersection of the sector $S$ with the disc of radius $D$ centered at $P$.
The agent constructs a tiling of size $r$,
one of whose tiles has a corner at point $P$ with sides of tiles parallel or perpendicular to the line $L'$. 
Let $\Sigma$ be the set of tiles that intersect the set $S^*$. The set $\Sigma$ can be partitioned into columns of tiles, where a column is the set of tiles whose
centers lie on a line perpendicular to the line $L'$. Columns can be indexed by integers $1,2,\dots , t$, along line $L'$, starting from point $P$, see Fig. \ref{basic}.

\begin{figure}[tp]
\centering
\includegraphics[scale=0.85]{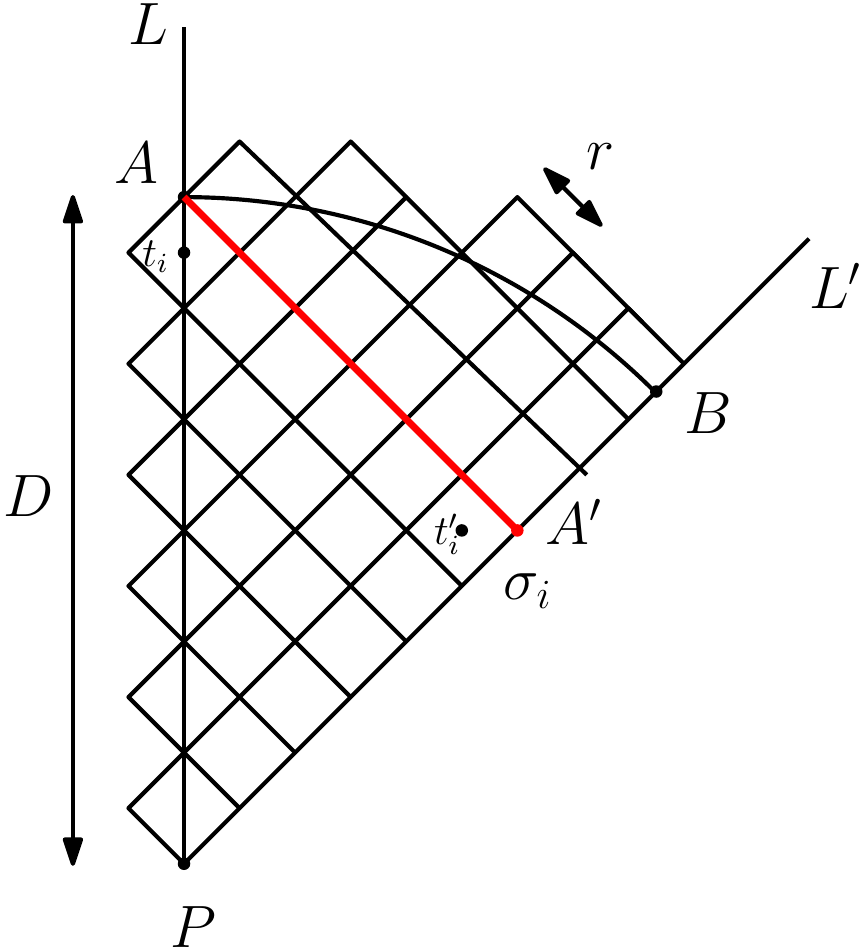}
\caption{The tiling of the 7-th sector $S$ with radius $D$ and $z=3$. The advice string $w$ is $(111)$. }
\label{basic}
\end{figure}

The aim of Procedure {\tt Basic Traversal with Advice} is to visit the centers of all tiles in the set $\Sigma$. This is done column by column, in increasing order of indices.
The procedure has input $w$, $D$ and $r$, where $w$ is a binary string, and $D > r$ are positive reals.  It can be described as follows.

\begin{center}
\fbox{
\begin{minipage}{11cm}

Procedure {\tt Basic Traversal with Advice} 

\vspace*{0.5cm}

Compute the sector $S$ using advice string $w$.\\
Compute the set $S^*$ using $S$ and $D$.\\
Compute the set $\Sigma$ of tiles using  $S^*$ and $r$.\\ 
Let $\sigma_1,\dots,\sigma_t$ be the columns of $\Sigma$.\\
Let $t'_i$ be the tile in $\sigma_i$ 
with one side in line $L'$.\\
Let $t_i$ be the tile in column $\sigma_i$ farthest from $t_i'$.\\
{\bf for} $i:=1$ {\bf to} $t$ {\bf do}\\
\hspace*{1cm}go to the center of tile $t_i'$\\
\hspace*{1cm}go to the center of tile $t_i$ on the line perpendicular to $L'$\\ 
 \hspace*{1cm}go back to the center of tile $t_i'$.

\end{minipage}
}
\end{center}

Suppose that the size of advice is $z\leq 1$. Then, for given positive reals $D$ and $r$, such that $D > r$, we define the spiral $X(D,r)$ (see Fig. \ref{spiral-large})
which is the trajectory resulting from the following sequence of instructions:
$(E,r)$, $(S,r)$, $(W,2 r)$, $(N,2  r)$, $(E,3 r)$, $(S,3  r)$, $(W,4 r)$, $(N,4  r)$, ..., $(E,(2k+1) r)$, where $k=\lceil D/r\rceil$. Note that, if the vision radius is $r$ and the treasure is located at distance at most $D$ from the initial position of the agent, then during the traversal of the spiral $X(D,r)$, the agent  must see the treasure. 

Now we are able to formulate Algorithm {\tt Basic Traversal}. This is a treasure hunt algorithm working under the assumption that an upper bound $D$ on the distance of the treasure from the initial position of the agent, and the vision radius $r$ are known to the agent. This algorithm will be used as a building block in our treasure hunt algorithms ignoring parameters $D$ and $r$.

\begin{center}
\fbox{
\begin{minipage}{9cm}

{\bf Algorithm} {\tt Basic Traversal}\\

\noindent
{\bf if} the size $z$ of advice is at least 2 {\bf then}\\
 \hspace*{1cm}call Procedure {\tt Basic Traversal with Advice}\\
{\bf else}\\
 \hspace*{1cm}follow trajectory $X(D,r)$

\end{minipage}
}
\end{center}

The following lemma estimates the number of tiles intersecting the set $S^*$. It will be used to estimate the cost of Algorithm {\tt Basic Traversal}.

\begin{lemma}\label{max-tile}
Let $D$ and $r$ be positive reals such that 
$r<D$, and let $z$ be an integer larger than 1. Let ${S}$ be any sector corresponding to half-lines $L$ and $L'$ forming an angle $2\pi/2^z$, where $L'$ is clockwise from $L$. Let $P$ be the intersection point of lines $L$ and $L'$. Let $S^*$ be the set of points which is the intersection of the sector $S$ with the disc of radius $D$ centered at $P$. Then the number of tiles of size $r$ whose sides are parallel or perpendicular to the line $L'$, and that intersect the set $S^*$, is at most $69(\frac{D^2}{2^z.r^2}+\frac{D}{r})$.
\end{lemma}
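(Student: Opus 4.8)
The plan is to bound the number of tiles intersecting $S^*$ by a simple area/perimeter argument. The region $S^*$ is a circular sector of radius $D$ and central angle $\theta = 2\pi/2^z$, so its area is $\frac{1}{2}\theta D^2 = \pi D^2/2^z$. Every tile that intersects $S^*$ lies within the $r\sqrt{2}$-neighborhood of $S^*$ (a tile has diameter $r\sqrt 2$), so if $N$ is the number of such tiles, then $N r^2 \le \mathrm{area}(S^* \oplus B(0, r\sqrt 2))$, where $\oplus$ denotes Minkowski sum. The area of this dilated sector is at most $\mathrm{area}(S^*) + (\text{perimeter of } S^*)\cdot r\sqrt 2 + \pi (r\sqrt2)^2$ (the standard Steiner-type inequality for a convex region; $S^*$ is convex since $\theta < \pi$ because $z\ge 2$). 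The perimeter of $S^*$ is $2D + \theta D = 2D + 2\pi D/2^z \le (2+\pi)D$. Putting this together, $N \le \frac{1}{r^2}\left(\frac{\pi D^2}{2^z} + (2+\pi)D\cdot r\sqrt2 + 2\pi r^2\right)$, which is of the form $O\!\left(\frac{D^2}{2^z r^2} + \frac{D}{r}\right)$, and chasing the explicit constants ($\pi < 3.15$, $(2+\pi)\sqrt2 < 7.28$, $2\pi < 6.29$, plus using $r<D$ to absorb the constant term into $D/r$) gives a bound well below $69(\frac{D^2}{2^z r^2} + \frac{D}{r})$.

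First I would set $\theta = 2\pi/2^z$ and record that $S^*$ is a convex sector, noting $\theta \le \pi/2 < \pi$ since $z\ge 2$. Next I would establish the containment: any tile $T$ meeting $S^*$ satisfies $T \subseteq S^* \oplus D(0,r\sqrt2)$, where $D(0,\rho)$ is the closed disc of radius $\rho$; hence the tiles meeting $S^*$ are pairwise-disjoint squares of area $r^2$ all contained in that dilated region, so $N r^2$ is at most its area. Then I would invoke the Steiner formula / the elementary bound $\mathrm{area}(K \oplus D(0,\rho)) \le \mathrm{area}(K) + \rho\,\mathrm{perim}(K) + \pi\rho^2$ valid for convex $K$. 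Finally I would substitute the area $\pi D^2/2^z$ and perimeter $\le (2+\pi)D$ of $S^*$, simplify, and use $r < D$ once to fold the lone $2\pi r^2/r^2 = 2\pi$ term into the $D/r$ term (since $1 \le D/r$), producing the stated constant.

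The main obstacle is not conceptual but bookkeeping: getting an \emph{honest} constant out of the argument requires being slightly careful about the dilation radius (a tile touching $S^*$ at a single point can stick out by its full diameter $r\sqrt2$, not $r$), and about the fact that the Steiner inequality as I've stated it needs convexity of $S^*$ — which is exactly why the hypothesis $z>1$ appears, ensuring $\theta<\pi$. If one wanted to avoid invoking convexity one could instead cover $S^*$ by $O(D/r)$ columns each of $O(D/(2^z r) + 1)$ tiles directly from the geometry of the tiling described before the lemma (the columns $\sigma_1,\dots,\sigma_t$), but the Minkowski-sum route is cleaner and the constant $69$ is generous enough to absorb any slack in either approach.
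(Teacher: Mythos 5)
Your proof is correct, and it takes a genuinely different route from the paper's. The paper splits into two cases according to whether $\frac{D}{r} \geq 2^z$: in the first case it bounds separately the tiles contained in $S^*$ (by area) and the tiles meeting the perimeter of $S^*$ (by cutting the perimeter into arcs of length $r/4$, each meeting at most $4$ tiles), and in the second case---where the sector is very thin---it abandons that count and instead encloses $S^*$ in a $D \times O(D/2^z)$ rectangle and counts tiles meeting the rectangle column by column. Your Minkowski-sum argument replaces both cases with a single computation: since the tiles are pairwise disjoint squares of area $r^2$ all contained in $S^* \oplus D(0,r\sqrt{2})$, and $S^*$ is convex (this is where $z\geq 2$, i.e.\ central angle at most $\pi/2$, enters---the same role it implicitly plays in the paper), the Steiner formula bounds $Nr^2$ by $\frac{\pi D^2}{2^z} + (2+\pi)\sqrt{2}\,Dr + 2\pi r^2$, and the lone constant term is absorbed into $D/r$ via $r<D$. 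Your constants check out ($\pi$ on the first term, roughly $13.6$ on the second, both comfortably under $69$), so you in fact prove a sharper bound. What the paper's approach buys is elementarity---it never invokes the Steiner inequality, only direct counting---and the explicit column structure it sets up is reused by Procedure {\tt Basic Traversal with Advice}; what yours buys is the elimination of the case analysis and a cleaner, tighter estimate. One small point worth making explicit if you write this up: the tiling is a genuine partition (each tile contains only two of its sides), so ``pairwise disjoint squares of area $r^2$'' is literally true and the inequality $Nr^2 \leq \mathrm{area}(S^*\oplus D(0,r\sqrt{2}))$ needs no further justification.
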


\begin{proof}
Let $k = 2^z$. The area of the set $S^*$ is $\frac{\pi D^2}{k}$. The number of tiles of size $r$ with sides parallel or perpendicular to the line $L'$ that are  contained in the set $S^*$ is $N_1 \leq \frac{\pi D^2}{kr^2}$. We estimate the number $N_2$ of tiles of size $r$ that intersect the perimeter of the set $S^*$. The length of the perimeter of $S^*$ is $(2D+2\pi D/k)$. Any segment of length at most $r/4$ of this perimeter intersects at most 4 tiles of size $r$. Hence, $N_2 \leq \lceil\frac{(2D+2\pi D/k)}{r/4}\rceil \cdot 4 \leq \frac{32D}{r}+\frac{32\pi D}{kr} + 4$. We first conclude the proof under the additional assumption that $\frac{D}{r} \geq k$. \\
Then, since $k\geq 4$, we have $N_2 \leq \frac{33D}{r}+\frac{32\pi D}{kr} \leq \frac{65D}{r}$. Therefore, the total number of tiles intersecting $S^*$ is at most $N \leq N_1+N_2 \leq \frac{\pi D^2}{kr^2} + \frac{65D}{r} \leq \frac{\pi D^2}{kr^2} + \frac{65D^2}{kr^2} \leq (\pi+65)\frac{D^2}{kr^2}$. 

We now remove the additional assumption. 
Suppose that $\frac{D}{r} < k$. Let $A$ be the point on line $L$ at distance $D$ from $P$.
Let $A'$ be the point on the line $L'$ such that $AA'$ is perpendicular to the line $L'$, see Fig. \ref{basic}.

Let $y=|AA'|=D \sin (2 \pi/k)$. We have $\lim_{k \to \infty} \frac{\sin (2\pi/k)}{2\pi/k}=1$, and hence $y= 2\pi D/k \leq 3 \pi D/k$, for sufficiently large $k$. In this case, the set $S^*$ is contained in a rectangle with one side of length $D$ and another side of length $y$.
Since sides of tiles are parallel or perpedicular to the line $L'$, the number of tiles intersecting this rectangle is at most $\lceil y/r \rceil \cdot \lceil D/r \rceil$. 
We have $\lceil y/r \rceil  \leq \lceil  (3\pi D)/(kr) \rceil \leq 12 D/(kr) +1$. Since $D<rk$, we have  $12 D/(kr) +1 \leq 13$. Hence $\lceil y/r \rceil \leq 13$. This implies $\lceil y/r \rceil \cdot \lceil D/r \rceil \leq 13 \cdot \lceil\frac{D}{r}\rceil \leq 13 \cdot (\frac{D}{r}+1)$. Since $r<D$, we have $ 13 \cdot (\frac{D}{r}+1) \leq 26\frac{D}{r}$. 

Hence, the number of tiles of size $r$ intersecting  $S^*$ is always at most $((\pi+65)\frac{D^2}{2^zr^2}+26\frac{D}{r}) \leq 69(\frac{D^2}{2^zr^2}+\frac{D}{r})$.
This proves the lemma.
\end{proof}

\begin{figure}[tp]
\centering
\includegraphics[scale=1.5]{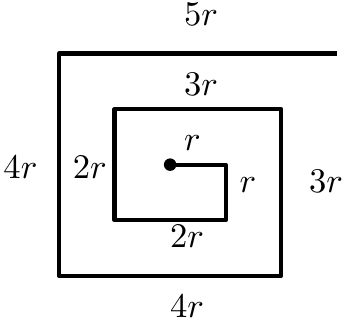}
\caption{The spiral $X(D,r)$ for $k=\lceil D/r\rceil=2$}
\label{spiral-large}
\end{figure}

The next lemma proves the correctness of Algorithm {\tt Basic Traversal} and estimates its cost.

\begin{lemma}\label{cost-basic}
Fix reals $D,r$ where $D$ is an upper bound on the distance between the initial position of the agent and the treasure, and $r$ is the radius vision of the agent,  such that $r < D$.   For advice of size $z\geq0$, Algorithm {\tt Basic Traversal} correctly finds the treasure and works at cost at most $138\cdot(\frac{D^2}{2^zr}+D)$.
\end{lemma}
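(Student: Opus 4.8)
The plan is to split the argument along the two cases defined by the structure of Algorithm {\tt Basic Traversal}: $z\le 1$ (follow the spiral $X(D,r)$) and $z\ge 2$ (call Procedure {\tt Basic Traversal with Advice}). In each case I must argue correctness (the agent is guaranteed to see the treasure) and then bound the length of the trajectory by $138\cdot(\frac{D^2}{2^zr}+D)$; note that for $z\le 1$ the target bound is simply $\Theta(\frac{D^2}{r}+D)$, which matches the spiral cost, so the shared constant $138$ is chosen to dominate both cases.

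For the case $z\le 1$, correctness was already observed in the text: the spiral $X(D,r)$ consists of concentric rectangular loops spaced at distance $r$ extending out to radius $\lceil D/r\rceil r\ge D$, so every point within distance $D$ of $P$ comes within distance $r$ of some point on the trajectory, hence the treasure is seen. For the cost, I would sum the lengths of the instructions: the $m$-th pair of legs has length roughly $m\cdot r$, and there are $2k+1$ legs with $k=\lceil D/r\rceil$, giving total length $O(\sum_{m=1}^{2k+1} mr)=O(k^2 r)=O((D/r+1)^2 r)=O(\frac{D^2}{r}+D)$. A careful accounting of the constants shows this is at most $138\cdot(\frac{D^2}{2^zr}+D)$ since $2^z\le 2$ only inflates the first term by a factor $2$.

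For the case $z\ge 2$, correctness follows because the agent receives the canonical advice $w$, decodes the correct sector $S$ containing $Q$, and the set $S^*$ (sector intersected with the disc of radius $D$) therefore contains $Q$; Procedure {\tt Basic Traversal with Advice} visits the center of every tile of size $r$ that intersects $S^*$, and since $Q$ lies in some such tile and the tile has diameter $r\sqrt2$... — here I need to be slightly careful: visiting the \emph{center} of a tile of side $r$ only guarantees getting within $\frac{r\sqrt2}{2}<r$ of any point of that tile, so the agent does see the treasure. (If the intended guarantee uses $r$ as the half-diagonal rather than the side one would adjust, but as written the center-to-corner distance $r/\sqrt2<r$ suffices.) For the cost, I invoke Lemma~\ref{max-tile}: there are at most $69(\frac{D^2}{2^zr^2}+\frac{D}{r})$ tiles in $\Sigma$. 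The trajectory proceeds column by column: for each column $\sigma_i$ the agent goes from $L'$ to the center of the first tile $t_i'$, sweeps along the column to $t_i$ and back to $t_i'$. Moving between the roots $t_i'$ of consecutive columns costs $r$ each (adjacent columns), for a total of at most $tr\le$ (number of tiles)$\cdot r$; sweeping each column out and back costs at most twice the column length, and the sum of all column lengths is at most (number of tiles)$\cdot r$, so the back-and-forth contributes at most $2\cdot$(number of tiles)$\cdot r$; reaching the first root from $P$ along $L'$ costs at most $D$. Altogether the length is $O(r\cdot(\text{number of tiles})+D)\le O(r\cdot 69(\frac{D^2}{2^zr^2}+\frac{D}{r})+D)=O(\frac{D^2}{2^zr}+\frac{D}{r}+D)=O(\frac{D^2}{2^zr}+D)$, and tracking constants yields the claimed $138\cdot(\frac{D^2}{2^zr}+D)$.

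The main obstacle is the bookkeeping of constants so that the \emph{same} constant $138$ works uniformly across both cases and absorbs: (i) the factor $2$ lost in the $z\le1$ case from replacing $2^z$ by $1$ in the denominator, (ii) the constant $69$ from Lemma~\ref{max-tile} multiplied by the (at most) $3$ arising from the "go in, sweep out, sweep back, plus move to next root" decomposition of the per-column cost, and (iii) the additive $D$ term from the initial approach along $L'$ versus the $D$ already present in the target bound. None of these steps is conceptually deep; the only place requiring genuine care is making the geometric claim that covering centers of all tiles meeting $S^*$ truly captures the treasure — i.e. confirming the half-diagonal $r/\sqrt2$ is less than the vision radius $r$ — and ensuring the columns are indeed adjacent (distance exactly $r$ between consecutive roots on $L'$), which follows from the tiling having size $r$ with sides perpendicular to $L'$.
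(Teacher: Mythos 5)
Your proposal follows essentially the same route as the paper: the same case split on $z$, the same spiral-length computation for $z\le 1$, and for $z\ge 2$ the same combination of Lemma~\ref{max-tile} with the observation that the center-to-corner distance $r/\sqrt{2}<r$ guarantees the treasure is seen from a tile center. The only discrepancy is in the constants: the paper charges each tile center at most two visits, each by a move of length at most $r$ (yielding exactly $2\cdot 69=138$), whereas your ``sweep out, sweep back, plus move to next root'' decomposition gives a factor $3$ plus an extra additive $D$, so as written it establishes the bound only with a constant somewhat larger than $138$ (which matters mildly, since $138$ is reused verbatim in the budget of Algorithm {\tt Medium vision}).
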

\begin{proof}
We consider the following two cases.  

\noindent
Case 1. Advice of size $z \geq 2$ \\
Using the canonical advice of size $z$, the agent computes the sector $S$ with angle of size $2\pi/2^z$ which contains the treasure. Let $S^*$ be the set of points which is the intersection of the sector $S$ with the disc of radius $D$ centered at $P$. 
In this case the agent executes the Procedure {\tt Basic Traversal with Advice} and visits the center of each tile intersecting the set $S^*$. Since the size of the tiling is $r$, the agent sees all the points of a  tile from its center. Let $T$ be the trajectory of the agent produced by the execution of the Procedure {\tt Basic Traversal with Advice}. Since every point in the set $S^*$ is at distance at most $r$ from some point of $T$, the agent must see the treasure by the end of the execution of Procedure {\tt Basic Traversal with Advice}. This proves the correctness of the algorithm in this case.

Next, we estimate the cost of our algorithm in this case. By Lemma \ref{max-tile}, the number of tiles intersecting the set $S^*$ is at most $69\cdot(\frac{D^2}{2^z.r^2}+\frac{D}{r})$. In the execution of Procedure {\tt Basic Traversal with Advice}, the agent makes the first move from $P$ to the center of the tile containing it, and all other moves from the center of a tile to the center of an adjacent tile. The center of each tile intersecting $S^*$ is visited at most twice, and each such move is at distance at most $r$. Hence the cost of the algorithm is at most $2 \cdot 69\cdot(\frac{D^2}{2^zr^2}+\frac{D}{r})r =138\cdot(\frac{D^2}{2^zr}+D)$ in this case.

\noindent
Case 2. Advice of size $z < 2$ \\
In this case the agent follows the trajectory $X(D,r)$. During the traversal of the trajectory the agent gets at distance at most $r$ from every point of the square of side $2k r$ centered at the initial position $P$ of the agent, where $k=\lceil D/r \rceil$. Since the treasure is located at distance at most $D$ from $P$, and the vision radius is $r$, the agent must see the treasure by the end of the execution of the algorithm, which proves correctness in this case. 

Next, we estimate the cost of our algorithm  in this case as follows. The length of the trajectory $X(D,r)$ is $2r[1+2+\dots +2k]+(2k+1)r \leq 2r[1+2+\dots +(2k+1)] \leq r[(2k+1)(2k+2)] \leq  r(2k+2)^2$. Since $k=\lceil D/r \rceil$, we have $r(2k+2)^2=4r(\lceil D/r \rceil +1)^2 \leq 4r(D/r +2)^2$. Since $r < D$, we have $4r(D/r +2)^2 \leq 4r (3D/r)^2$. Hence, the length of the trajectory of the agent is at most $36D^2/r$, which is at most $72\cdot\frac{D^2}{2^zr}$
because $z\leq 1$. Hence the cost of the algorithm is at most $138\cdot(\frac{D^2}{2^zr}+D)$ in this case as well.
\end{proof}

The following lemma establishes a lower bound on the cost of any treasure hunt algorithm using advice of size $z$.

\begin{lemma}\label{lb-medium}
Suppose that the treasure is at distance at most $D$ from the initial position of the agent, and that the vision radius is
$r<0.9D$. Then the cost of any treasure hunt algorithm using advice of size $z \geq 0$ is at least $\frac{1}{800}(\frac{D^2}{2^z r} + D)$. 
\end{lemma}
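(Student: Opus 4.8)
The plan is to establish the two parts of the lower bound separately: the term $\Omega(D)$ comes from a simple distance argument, and the term $\Omega(D^2/(2^z r))$ comes from an area/covering argument combined with the fact that advice of size $z$ can only partition the relevant directions into $2^z$ classes.

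First I would handle the easy term. Since $r<0.9D$, the treasure, placed at distance exactly $D$ from $P$, is outside the disc of radius $r$ around $P$; in fact it is at distance at least $D-r\ge 0.1D$ from $P$. Any algorithm must bring the agent to within distance $r$ of the treasure, so its trajectory has length at least $D-r\ge 0.1D$. This already gives cost $\ge D/10$, which dominates $\frac{1}{800}D$ comfortably, and it holds regardless of the advice.

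For the main term I would use an adversary argument against the oracle. Fix $D$ and $r<0.9D$. Consider the circle $C$ of radius $D$ centered at $P$ (or, to be safe about ``distance at most $D$'', an annulus of width $r$ just inside it — whichever makes the geometry cleanest). Cover a suitable portion of this region by $m$ ``candidate'' points that are pairwise at distance greater than $2r$, so that no single point of the plane lies within the vision radius of two of them; a standard packing estimate along the circle gives $m = \Theta(D/r)$ such points, but I actually want more — I want to place candidate treasure locations throughout the disc of radius $D$, spaced more than $2r$ apart, giving $m = \Theta(D^2/r^2)$ candidates, each requiring the agent's trajectory to approach within $r$ of it. For a fixed advice string $w$ (one of only $2^z$ possibilities), the agent runs one deterministic trajectory $T_w$; by the separation, the points of $T_w$ that are within $r$ of distinct candidates are themselves more than... — more carefully, to ``see'' $N$ candidates the trajectory must visit $N$ disjoint discs of radius $r$ whose centers are pairwise $>2r$ apart, so consecutive visits cost $>0$ and a counting/path-length argument (the trajectory restricted to these visits is a walk through $N$ points pairwise at distance $>2r$... actually pairwise distance could be large, so the bound is at least... ) — here I need to be a bit careful: the right statement is that seeing $N$ well-separated candidates costs at least $\Omega(N r)$ only if they are also within bounded distance, which they are, being inside a disc of radius $D$. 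Actually the clean bound is: a curve that comes within $r$ of each of $N$ points that are pairwise more than $2r$ apart and all lie in a disc of radius $D$ has length $\ge (N-1)\cdot\text{(min gap)} \ge \dots$; since min gap is only $2r$ this gives $\Omega(Nr)$ only when the points are ``along the way'', which a grid packing of the disc is. So cost to see all $m$ candidates with a single trajectory is $\Omega(D^2/r)$.

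Now bring in advice. The oracle assigns to each of the $m$ candidate locations one of $2^z$ advice strings; by pigeonhole some string $w$ is assigned to at least $m/2^z = \Omega(D^2/(2^z r^2))$ candidates. The agent, on input $w$, must (since all these candidates are consistent with the same advice and the same $D,r$) find the treasure wherever it is among them, so $T_w$ must come within $r$ of every one of these $\Omega(D^2/(2^z r^2))$ mutually $2r$-separated points inside the disc of radius $D$; by the curve-length bound above this forces $|T_w| = \Omega\!\big(\frac{D^2}{2^z r^2}\cdot r\big) = \Omega(D^2/(2^z r))$. Combining with the $\Omega(D)$ bound from the first paragraph and tracking constants carefully to land under $\frac{1}{800}$ finishes the proof.

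The main obstacle I expect is the geometric/packing step: making precise that a single curve which must approach within $r$ of $N$ points that are pairwise $>2r$ apart and confined to a disc of radius $D$ has length $\Omega(Nr)$, with an explicit constant good enough to yield $\frac{1}{800}$. The subtlety is that pairwise separation $>2r$ alone only forces the curve to spend ``width $r$'' near each point; one must argue, e.g., by charging disjoint length-$r$ sub-arcs to each visited disc, or by a volume argument (the $r/2$-neighborhood of the curve has area $O(r\cdot\text{length})$ and must contain the $N$ disjoint discs of radius $r/2$ around the candidates, giving length $\ge \Omega(Nr)$) — the area argument is probably the cleanest and most robust way to get an honest constant, and I would write it that way.
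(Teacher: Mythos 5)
Your proposal takes essentially the same route as the paper: the trivial $D-r\ge D/10$ bound, plus a pigeonhole over the $2^z$ advice strings applied to $\Theta(D^2/r^2)$ pairwise well-separated candidate treasure locations in a region of diameter $O(D)$, plus a lower bound on the length of a single trajectory that must approach every candidate sharing one advice string (the paper realizes the candidates as centers of alternating $2r\times 2r$ tiles of a square with side $\sqrt{2}D/2$ and argues the agent must enter each such tile, the tiles being pairwise at distance at least $r$). Two small repairs are needed in your version: in the area argument, a trajectory that only comes within distance $r$ of a candidate $q$ need not have $B(q,r/2)$ inside its $r/2$-neighborhood, so you should use the $3r/2$-neighborhood of the trajectory instead (or space the candidates more than $3r$ apart and charge length at least $r$ between consecutive disc entries); and when $D/r=O(1)$ your packing yields too few candidates for the pigeonhole to give anything, so you must observe explicitly (as the paper does in its Case 2, for $D/r\le 5$) that in that regime $\frac{D^2}{2^z r}=O(D)$ and the $D/10$ bound alone already implies the stated $\frac{1}{800}\left(\frac{D^2}{2^z r}+D\right)$.
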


\begin{proof}
The obvious lower bound on the cost of  treasure hunt is $D-r$, as the treasure can be at distance exactly $D$ from the initial position of the agent. Hence, in the case when $r<0.9D$, the cost of treasure hunt is at least $D-r \geq D/10$. Next, we consider the cost of treasure hunt using any advice of size $z \geq 0$.
Consider the square $S$ of side $\sqrt{2} D/2$ with sides vertical and horizontal and with the South-West corner at the starting position $P$ of the agent, see Fig. \ref{lb-medium-fig}. 

\begin{figure}[tp]
\centering
\includegraphics[scale=0.8]{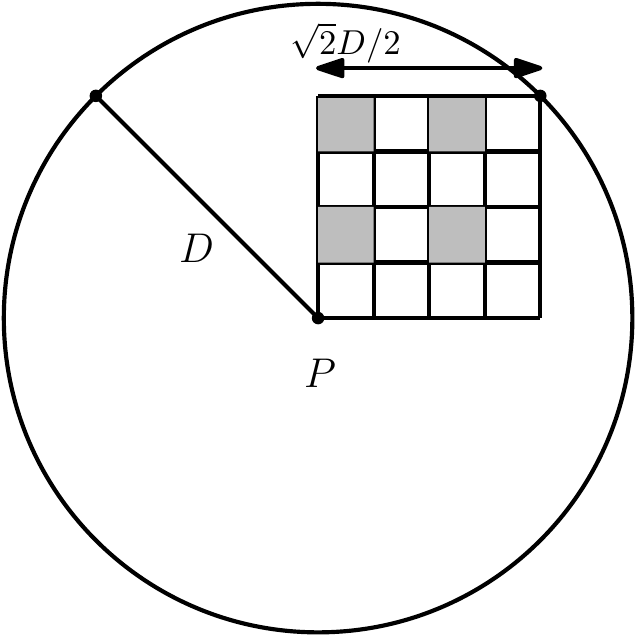}
\caption{Illustration of the proof of Lemma \ref{lb-medium} }
\label{lb-medium-fig}
\end{figure}

Consider the tiling of size $2r$ with sides vertical or horizontal. 
The number $N$ of tiles included in the square $S$ is at least $(\lfloor \frac{\sqrt{2} D}{4r} \rfloor)^2 \geq (\frac{\sqrt{2} D}{4r}-1)^2$. 
Rows of tiles included in $S$ are indexed $1,2, \dots$ from the North side of $S$ going South and columns of tiles included in $S$ are indexed $1,2,\dots$ from the West side of $S$ going East. 

Case 1. $D/r >5$\\ 
In this case we have $r < D/5$. Hence $N \geq (\frac{\sqrt{2} D}{4r}-1)^2 \geq (\frac{\sqrt{2} D}{4r}-\frac{D}{5r})^2 \geq (\frac{D}{10r})^2$. Consider the center of every other tile of odd-indexed tile rows in $S$ as possible locations of the treasure, see Fig. \ref{lb-medium-fig} (we call them shaded tiles in the rest of the proof). Hence, the number of such possible locations is at least $\frac{1}{4}(\frac{D}{10r})^2$. Using advice of size $z$, we have $2^z$ different advice strings. By the Pigeonhole Principle, there are at least $N^*=\frac{1}{400}(\frac{D^2}{2^z r^2})$ shaded tiles (with possible locations of the treasure at their center) corresponding to the same advice string. In order to see the treasure at the center of a tile, the agent has to be inside this tile. Hence,  
if the treasure is  at the center of one of these $N^*$ tiles, the agent must visit each of these tiles. 
Since the distance between any two shaded tiles is at least $r$, the cost of any treasure hunt algorithm using advice of size $z$ must be at least $\frac{1}{400}(\frac{D^2}{2^z r})$. As mentioned before, the cost of any treasure hunt algorithm is at least $D/10$. Hence we conclude that
the cost of any treasure hunt algorithm using advice of size $z$ must be at least $\frac{1}{800}(\frac{D^2}{2^z r} + D)$.\\

Case 2. $D/r \leq 5$.\\
As mentioned before, we have the obvious lower bound $D-r \geq D/10$. Since $D/r \leq 5$ and $z \geq 0$, we have $\frac{1}{800}(\frac{D^2}{2^z r}+D) \leq \frac{1}{800}(\frac{5D}{2^z}+D) \leq \frac{1}{800}\cdot 6D \leq D/10$.

Hence, in all cases, we get the lower bound $ \frac{1}{800}(\frac{D^2}{2^z r} + D)$ on the cost of any treasure hunt algorithm using advice of size $z$.     
\end{proof}

\section{Small vision radius}
In this section, we consider the case of small vision radius i.e. $r \leq 1$. Our aim is to design an algorithm working at cost
$O(D+\frac{D^2}{2^zr}(\log D + \log 1/r))$ (which, as we will later show, is optimal) where the size of advice is $z$ and the distance between the location of the treasure and the initial position of the agent is at most $D$. The high-level idea of the algorithm is to make consecutive hypotheses $D'$ and $r'$, concerning $D$ and $r$ respectively, where $D'=2^i$
and  $r'=1/2^j$, and for each of those hypotheses execute Algorithm {\tt Basic Traversal} and backtrack to the initial position, until the treasure is seen.
This idea follows  that from \cite{Pe}. However, similarly to the cost $O(\frac{D^2}{r}(\log D + \log 1/r))$ obtained in \cite{Pe}, the straightforward application of this idea  would result in the cost
$O((D+\frac{D^2}{2^zr})(\log D + \log 1/r))$, i.e., the summand $D$ would also be multiplied by $\log D + \log 1/r$. This did not hurt in \cite{Pe} because, in the absence of advice, i.e., for $z=0$, the term $\frac{D^2}{r}$ always dominates $D$ in view of $r<D$. In our case, $D$ may dominate
$\frac{D^2}{2^zr}(\log D + \log 1/r)$ for large enough $z$, and then $\Theta((D+\frac{D^2}{2^zr})(\log D + \log 1/r))=\Theta((D(\log D + \log 1/r))$
but $\Theta(D+\frac{D^2}{2^zr}(\log D + \log 1/r))=\Theta(D)$. Thus our cost would not be optimal in this case. In order to take care of this possibility, we need two separate algorithms: one working 
at cost  $O(\frac{D^2}{2^zr}(\log D + \log 1/r))$, and the other working at cost $O(D)$, depending on which term dominates. We merge the two algorithms
by interleaving trips following trajectories determined by each of them at exponentially growing distances and each time backtracking to the starting position.
If $z$ is very small ($z\leq 1$), the term $\frac{D^2}{2^zr}(\log D + \log 1/r)$ dominates $D$, and hence we can apply the sequential probing of hypotheses concerning $D$ and $r$ in a straightforward way.

We now proceed to the detailed description of the algorithm.
Let $T(i,j)$ be the trajectory of the agent resulting from the execution of Algorithm {\tt Basic Traversal} with advice of size $z$ for input $D=2^i$ and $r=1/2^j$. 

Consider the infinite matrix $B$ whose rows are numbered by consecutive positive integers and columns are numbered by consecutive positive even integers.. The term $B(i,j)$ in row $i$ and column $j$ is the trajectory $T(i,j)\overline{T(i,j)}$,
where $\overline{T}$ denotes the trajectory reverse with respect to $T$.

For any positive integer $i$, denote by $\Gamma[i]$ the concatenation of trajectories $B(i,2)$, $B(i-1,4)$,...$B(1,2i)$ in the $i$th diagonal of the matrix $B$. Let $\Pi_1$ be the infinite trajectory resulting from the concatenation of trajectories $\Gamma[1],\Gamma[2],\dots$. 

If $z\leq 1$, we simply follow the trajectory $\Pi_1$.
Suppose that $z \geq 2$. Using the advice string, the agent finds the sector $S$ with angle $2\pi/2^z$ which contains the treasure. Let $L$ and $L'$ be the half-lines corresponding to the sector $S$, where $L'$ is clockwise from $L$,
and let $\Pi_2$ be the infinite trajectory following the line $L'$  from the initial position of the agent.  

We formulate our algorithm as follows. It is interrupted when the agent sees the treasure.

\begin{center}
\fbox{
\begin{minipage}{9cm}

{\bf Algorithm} {\tt Small vision}\\

\noindent
{\bf if} the size $z$ of advice is at least 2 {\bf then}\\
\hspace*{0.5cm}$p:=1$\\
\hspace*{0.5cm}{\bf repeat}\\
 \hspace*{1cm}Go at distance $2^p$ along the trajectory $\Pi_1$\\
 \hspace*{1cm}Backtrack to the initial position\\
 \hspace*{1cm}Go at distance $2^p$ along the trajectory $\Pi_2$\\
 \hspace*{1cm}Backtrack to initial position\\
 \hspace*{1cm}$p:=p+1$\\
{\bf else}\\
 \hspace*{1cm} Follow the trajectory $\Pi_1$
\end{minipage}
}
\end{center}

Before proving the correctness and estimating the cost of our algorithm we state the following simple geometric observation, see Fig. \ref{long-2}.

\begin{figure}[tp]
\centering
\includegraphics[scale=1.0]{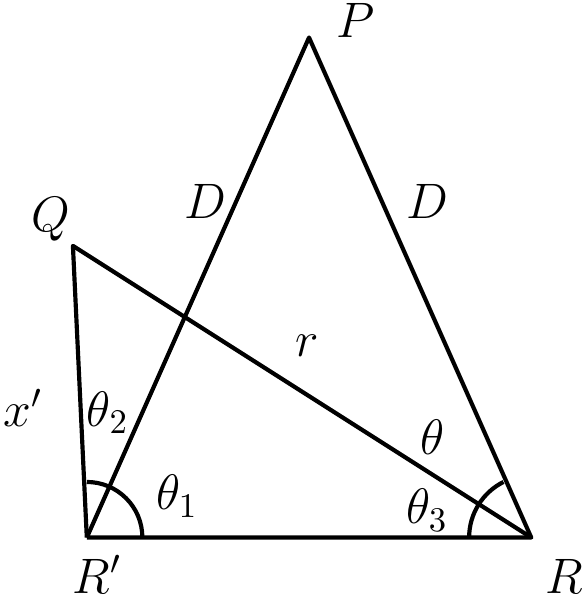}
\caption{Illustration for the proof of Lemma \ref{geom}}
\label{long-2}
\end{figure}

\begin{lemma}\label{geom}
Consider an isosceles triangle $PRR'$ such that $|PR|=|PR'|$. Let $Q$ be a point such that $\angle QR'R > \angle PR'R$ and $\angle QRR' <\angle PRR'$.
Then $|QR'|<|QR|$.
\end{lemma}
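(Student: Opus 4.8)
The plan is to prove this small geometric fact about an isosceles triangle $PRR'$ with $|PR|=|PR'|$, comparing the distances from an external point $Q$ to the two base vertices $R$ and $R'$. The key idea is that in any triangle, a larger side is opposite a larger angle, so it suffices to compare $\angle QRR'$ and $\angle QR'R$ inside triangle $QRR'$: if $\angle QRR' < \angle QR'R$ then $|QR'| < |QR|$. So the whole argument reduces to showing $\angle QR'R > \angle QRR'$ from the given hypotheses.

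First I would record the base angles of the isosceles triangle: since $|PR|=|PR'|$, we have $\angle PRR' = \angle PR'R =: \beta$. The hypotheses say $\angle QR'R > \angle PR'R = \beta$ and $\angle QRR' < \angle PRR' = \beta$. Chaining these gives immediately $\angle QR'R > \beta > \angle QRR'$, hence $\angle QR'R > \angle QRR'$. Then, applying the standard fact that in triangle $QRR'$ the side opposite the larger angle is longer — the side opposite $\angle QR'R$ is $QR$ and the side opposite $\angle QRR'$ is $QR'$ — we conclude $|QR| > |QR'|$, which is exactly the claim.

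The one point that needs a little care is whether $Q$, $R$, $R'$ actually form a nondegenerate triangle and whether the angles $\angle QRR'$ and $\angle QR'R$ referred to in the hypotheses are the interior angles of that triangle (as opposed to, say, being reflex or the configuration being collinear). Given the intended use (illustrated in Fig.~\ref{long-2}, with $Q$ on the far side), the inequalities $\angle QR'R > \angle PR'R \geq 0$ and $0 \le \angle QRR' < \angle PRR'$ force these angles to be genuine positive interior angles summing to less than $\pi$ together with $\angle RQR'$, so the triangle is nondegenerate; I would state this briefly rather than belabor it. Thus the main — and essentially only — obstacle is just making the degeneracy/interpretation of the angles precise; the inequality itself is a one-line consequence of the law of sines (or the elementary "larger angle, longer opposite side" theorem) once the base angles of the isosceles triangle are substituted in.
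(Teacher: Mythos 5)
Your argument is correct and is essentially the paper's own proof: both set the equal base angles of the isosceles triangle to a common value, chain the two hypotheses to get $\angle QR'R > \angle QRR'$, and conclude via the ``larger angle faces a larger side'' fact in triangle $QRR'$. Your extra remark on nondegeneracy is a reasonable precaution but not something the paper dwells on either.
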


\begin{proof}
Since the triangle $PR'R$ is isosceles, we have $\angle PRR'= \angle PR'R= \theta_1$, see Fig. \ref{long-2}. Consider the triangle $QR'R$. Let $\angle QR'R =\theta_1+ \theta_2$ and $\angle QRR'= \theta_3$. By the definition of the points $Q$ and $P$, we have $\theta_3 < \theta+\theta_3=\theta_1$ and $\theta_1 < \theta_1+\theta_2$, see Fig. \ref{long-2}. Hence $\theta_3<\theta_1+\theta_2$ and thus $|QR'|< |QR|$ because, in the 
triangle $QR'R$, a larger angle must face a larger side. 
\end{proof}

The following theorem proves the correctness and estimates the cost of Algorithm {\tt Small vision}.

\begin{theorem}\label{small-vision-cost}
Suppose that the treasure is at distance at most $D$ from the initial position of the agent and the vision radius $r$  is at most 1, where parameters $D$ and $r$ are unknown to the agent. Then an agent executing Algorithm {\tt Small vision} finds the treasure at cost $O(D+\frac{D^2}{2^zr}(\log D + \log 1/r))$.
\end{theorem}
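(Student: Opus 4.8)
The plan is to analyse the two branches of Algorithm~{\tt Small vision} separately, and, in the branch $z\ge 2$, to treat the two interleaved trajectories $\Pi_1$ and $\Pi_2$ separately. The first step is to isolate the ``target'' entry of the schedule: let $i_0$ be the least positive integer with $2^{i_0}\ge D$ and let $j_0$ be the least positive even integer with $1/2^{j_0}\le r$; using $r\le 1$ one gets $2^{i_0}=O(\max(D,1))$ and $2^{j_0}=O(1/r)$. Then $T(i_0,j_0)$ is an execution of Algorithm~{\tt Basic Traversal} with a correct overestimate of the distance and a small enough value of the vision radius, so by Lemma~\ref{cost-basic} the agent sees the treasure by the time it finishes traversing $T(i_0,j_0)$; in the matrix $B$ this trajectory lies on the diagonal $\Gamma[m_0]$ with $m_0=i_0+j_0/2-1$, so $4^{m_0}=O(D^2/r)$, $2^{m_0}=O(D/\sqrt r)$, and $m_0=O(\log D+\log 1/r)$.

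Correctness is then immediate in both branches. For $z\le 1$ the agent follows $\Pi_1=\Gamma[1]\Gamma[2]\cdots$, which contains $T(i_0,j_0)$. For $z\ge 2$ the agent performs, for $p=1,2,\dots$, a trip of length $2^p$ along $\Pi_1$, a backtrack to $P$, a trip of length $2^p$ along $\Pi_2$, and a backtrack; once $2^p$ exceeds the length of the prefix of $\Pi_1$ ending with $T(i_0,j_0)$, that round's $\Pi_1$-trip already contains $T(i_0,j_0)$, so the treasure is seen. Since round $p$ costs $O(2^p)$ and the rounds grow geometrically, the total cost is $O(\min(L_1,L_2))$, where $L_1$ is the length of $\Pi_1$ up to the end of $T(i_0,j_0)$ and $L_2$ is the distance the agent must travel along $L'$ before it first sees the treasure.

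The next step is to bound $L_1$. By Lemma~\ref{cost-basic}, $|B(a,b)|=2|T(a,b)|\le 276\bigl(2^{2a+b}/2^z+2^a\bigr)$, and every entry $B(d{+}1{-}k,2k)$ of $\Gamma[d]$ satisfies $2a+b=2d+2$; summing over the $d$ entries of $\Gamma[d]$ and then geometrically over $d\le m_0$ gives $L_1=O\bigl(m_0\,4^{m_0}/2^z+2^{m_0}\bigr)$, that is,
\[
L_1=O\!\left(\frac{D^2}{2^z r}\,(\log D+\log 1/r)+\frac{D}{\sqrt r}\right).
\]
The first summand already has the required form, so it remains to absorb the ``fringe'' term $D/\sqrt r$. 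By $D/\sqrt r\le D+D^2/r$ (AM--GM, valid for $D\ge\tfrac14$), this term is $O(D+\tfrac{D^2}{r})$, which for $z\le 1$ is $O(D+\tfrac{D^2}{2^z r})$ and hence within the claimed bound. For $z\ge 2$ I would split on whether the sector $S$ is ``thin'': if $D\sin(2\pi/2^z)\le r$, then the treasure lies within distance $r$ of the half-line $L'$ at a point at distance at most $D$ from $P$ — here Lemma~\ref{geom} is the clean tool to argue that the distance from the agent to the treasure, as the agent moves along $L'$, drops to $\le r$ within the first $O(D)$ of the walk — so $L_2=O(D)$ and the total cost is $O(D)$. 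If $D\sin(2\pi/2^z)>r$, then, since $\sin x=\Theta(x)$ for $0<x\le\pi/2$, we get $2^z=O(D/r)$, whence $\tfrac{D^2}{2^z r}=\Omega(D)$ and the target bound is $\Omega\bigl(D(\log D+\log 1/r)\bigr)$, which again absorbs $D/\sqrt r$, so the cost $O(L_1)$ is within the claim.

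The step I expect to be the main obstacle is precisely this cost estimate for $\Pi_1$ together with the case split that finishes it: one must verify that the diagonal enumeration of $B$ never multiplies the dominant term $4^{m_0}/2^z$ by more than a factor $O(\log D+\log 1/r)$, keeps the linear fringe contributions $\sum 2^a$ geometric, and that the residual $D/\sqrt r$ really is dominated by $O(D+\frac{D^2}{2^z r}(\log D+\log 1/r))$ in each of the three regimes ($z\le 1$, thin sector, non-thin sector). The geometric input making the $\Pi_2$ part rigorous is Lemma~\ref{geom}, and the whole argument rests on Lemma~\ref{cost-basic} applied entry-by-entry to the schedule.
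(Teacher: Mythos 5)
Your architecture matches the paper's almost exactly --- the same pivot entry $B(a,b)$ with $a=\lceil\log D\rceil$ and $b$ the least even integer $\ge\lceil\log(1/r)\rceil$, the same geometric-doubling argument giving cost $O(\min(L_1,L_2))$, and the same three-way split into $z\le 1$, thin sector (handled through $\Pi_2$ and Lemma~\ref{geom}) and non-thin sector (handled through $\Pi_1$ and Lemma~\ref{cost-basic}) --- and the first two of those cases are handled correctly. The gap is exactly where you feared: absorbing the fringe term $D/\sqrt r$ in the non-thin subcase. From $D\sin(2\pi/2^z)>r$ you get $2^z=O(D/r)$, hence $\frac{D^2}{2^zr}=\Omega(D)$ and a target bound of $\Omega(D(\log D+\log(1/r)))$; but $D/\sqrt r=O(D(\log D+\log(1/r)))$ would need $1/\sqrt r=O(\log D+\log(1/r))$, which is false for small $r$ (for $r=2^{-1000}$ and constant $D$ the left side is $2^{500}$ and the right side is about $10^3$). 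The AM--GM fallback does not rescue it either: $D/\sqrt r\le\frac12(D+D/r)$, and $D/r=O(\frac{D^2}{2^zr})$ only when $2^z=O(D)$, which the non-thin condition $2^z<2\pi D/r$ does not imply. In the regime $D\log(D/r)/\sqrt r\ll 2^z\ll D/r$ (e.g.\ $D=4$, $r=2^{-1000}$, $2^z=2^{800}$) the non-thin condition holds and the target bound is $\Theta(2^{214})$, yet the prefix of $\Pi_1$ up to $B(a,b)$ has length $\Theta(D/\sqrt r)=\Theta(2^{500})$: each $B(i,2)$ genuinely costs $\Omega(2^i)$ for the walk along the sector, and the treasure can sit at perpendicular distance strictly between $r$ and $D\sin(2\pi/2^z)$ from $L'$, so that $\Pi_2$ never sees it and no coarser $T(i,j)$ sees it early. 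So this is a real gap, not one that a sharper inequality closes.

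It is worth adding that your bookkeeping is more careful than the paper's at precisely this point, which is why the problem surfaces in your write-up. In its Subcase 2.2 the paper bounds $|\Gamma[i]|$ by $2\cdot\frac{138}{2^z}\{[2^{2i+2}+\cdots]+[2^i+\cdots+2^1]\}$, i.e.\ it divides the linear contributions $D'=2^i$ by $2^z$; Lemma~\ref{cost-basic} only gives $138(\frac{D'^2}{2^zr'}+D')$, with the $D'$ summand unattenuated by the advice, and restoring it produces exactly your $D/\sqrt r$. If you want a proof you can fully stand behind, keep your treatment of the $z\le 1$ and thin-sector cases; the non-thin case needs either an argument that the treasure is found well before $B(a,b)$, or a modified schedule that does not push the distance hypotheses $D'=2^i$ all the way up to $D/\sqrt r$.
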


\begin{proof}
We first prove the correctness of the algorithm. 
Let $a=\lceil \log D \rceil$ and let $b$ be the smallest even integer greater or equal to $\lceil \log \frac{1}{r}\rceil$. Thus $b \leq \lceil \log \frac{1}{r}\rceil +1$. 
The agent sees the treasure by the time when it traverses the trajectory $B(a,b)$ which is a sub-trajectory of $\Pi_1$. Since in both cases, the agent follows $\Pi_1$ arbitrarily far until it sees the treasure, at some point it will traverse the trajectory $B(a,b)$, which proves correctness.

We now estimate the cost of the algorithm. We consider the following two cases.\\

\noindent

Case 1. Advice of size $z < 2$ \\
Let $a = \lceil \log D \rceil$ and let $b$ be the smallest even integer greater or equal to $\lceil \log \frac{1}{r}\rceil$.
When $z<2$, the trajectory $B(i,j)$ follows the spiral $X(2^i,2^{-j})$ and backtracks on it. In this case, the analysis follows closely that from the proof of Theorem 2.1 in \cite{Pe}. 
The length of the trajectory $B(i,j)$ is  $4\cdot 2^{-j}[1+2+\dots +2k]+2(2k+1)2^{-j} \leq 4\cdot2^{-j}[1+2+\dots +(2k+1)] \leq 2\cdot 2^{-j}[(2k+1)(2k+2)] \leq 2\cdot 2^{-j}(2k+2)^2$, where $k=\lceil 2^i/2^{-j} \rceil> 1$. Hence,  the length of trajectory $B(i,j)$ is at most $32k^2 2^{-j} \leq 32 \cdot 2^{2i+2}\frac{1}{2^{-j}} = 128\cdot 2^{2i+j}$. Hence the length of the trajectory $\Gamma[i]$ is at most $128[2^{(2i+2)}+2^{(2(i-1)+4)}+2^{(2(i-2)+6)}\dots+2^{(2+2i)}]=128 i \cdot 2^{(2i+2)}$. The term $B(a,b)$ of the matrix $B$ is in the $(a+\frac{b}{2}-1)$th diagonal. The cost of Algorithm {\tt Small vision} in this case is at most the sum of lengths of trajectories $\Gamma[1], \Gamma[2], \dots, \Gamma[a+\frac{b}{2}-1]$, which is at most $2\cdot 128 i \cdot 2^{(2i+2)}$, where $i = a+\frac{b}{2}-1$. Since $z < 2$ in this case, the cost is at most $2\cdot 128 i \cdot 2^{(2i+2)} \leq 2048 \cdot \frac{1}{2^z}\cdot i 2^{2i}$. Since $a=\lceil \log D \rceil$ and $\frac{b}{2}-1 \leq \frac{1}{2} \lceil \log \frac{1}{r}\rceil$,
we have $i \in O(\log D + \log 1/r))$. By definition, $2^a \leq 2D$ and $2^b\leq 4/r$. Hence $2^{2i} \leq 2^{2a+b} \leq 16\frac{D^2}{r}$.
Hence the cost is in $O(\frac{D^2}{2^z r}(\log D+\log \frac{1}{r}))$, and thus in $O(D+\frac{D^2}{2^zr}(\log D + \log 1/r))$.\\

\noindent

Case 2.  Advice of size $z \geq 2$ \\
We consider the following two subcases. \\

Subcase 2.1. $\frac{D}{r} \leq \frac{1}{\sin (2\pi/2^z)}$\\

Consider  phase $p=\lceil \log D \rceil$ in the {\bf repeat} loop of Algorithm {\tt Small vision}.

{\bf Claim.} 
The agent sees the treasure following the trajectory $\Pi_2$ by the end of the phase $p$ of the Algorithm {\tt Small vision}. 

In order to prove the claim, consider the canonical advice of size $z$. Using the advice string, the agent finds the sector $S$ with angle $2\pi/2^z$ which contains the treasure. Let $L$ and $L'$ be the half-lines corresponding to the sector $S$, where $L'$ is clockwise from $L$. Let $S^*$ be the set of points which is the intersection of the sector $S$ with the disc of radius $D$ centered at $P$. Let $R$ (respectively $T$) be the point on the half-line $L$ (respectively $L'$) such that the distance between the initial position $P$ of the agent and the points $R$ and $T$ is $D$. Let $Q$ be the point on the half line $L'$ such that the line $RQ$ is perpendicular to $L'$ and hence $|PQ|\leq D $, see Fig. \ref{claim}. Let $|RQ|=x$. Using the triangle $PQR$, we have $\sin (2\pi/2^z) = x/D$ i.e $\frac{D}{x}=\frac{1}{\sin (2\pi/2^z)}$. In this case, since $\frac{D}{r} \leq \frac{1}{\sin (2\pi/2^z)}$, we have $\frac{D}{r} \leq \frac{D}{x}$ and hence $x \leq r$. Let $S'$ be the set of points inside the triangle $PQR$. Since in the phase $p$ of Algorithm {\tt Small vision} the agent goes on the half-line $L'$ at distance $2^p \geq D$, and in view of $x \leq r$, the agent must see all the points in the set $S'$ while following the trajectory $\Pi_2$, by the end of the phase $p$.

Next, we show that the agent also sees all the points inside the set $(S^*\setminus S')$ from the point $Q$.

\begin{figure}[tp]
\centering
\includegraphics[scale=0.9]{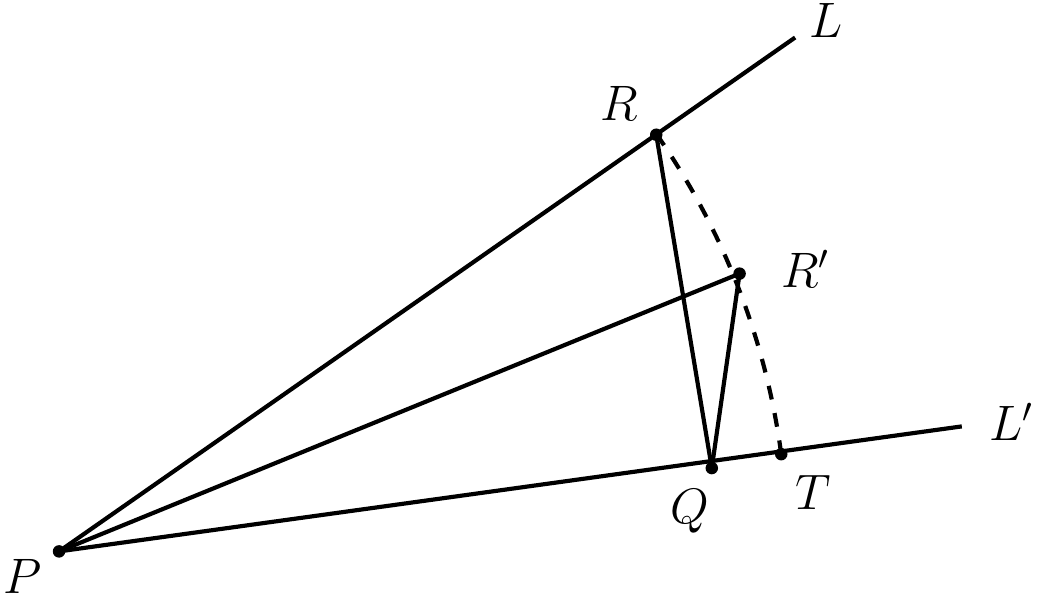}
\caption{Illustration of the proof of Theorem \ref{small-vision-cost} }
\label{claim}
\end{figure}

Let $R'$ be any point on the arc $RT$. Since $|PR|=|PR'|=D$, the triangle $PR'R$ is isosceles (see Fig. \ref{claim}).
By Lemma \ref{geom}, we have $|QR'|<|QR|< r$.
Hence, from the point $Q$, the agent can see all the points in the arc $RT$, and hence also all the points in the set $S^*\setminus {S}'$. 

It follows that when the agent reaches point $Q$ following the trajectory $\Pi_2$, it sees every point in $S^*$, and thus it must see the treasure by this time. This proofs the claim.

In view of the claim, the length of the trajectory of the agent until it finds the treasure is at most the sum of the lengths of trajectories traversed by the agent until the end of phase $p$ of Algorithm {\tt Small vision}. In each phase $q=1,2,\dots$ of Algorithm {\tt Small vision}, the agent first goes at  distance $2^q$ along the trajectory $\Pi_1$ and backtracks, and then goes at  distance $2^q$ along the trajectory $\Pi_2$ and backtracks. Hence, the cost of the algorithm in this case is at most $4(2^1+2^2+2^3+,\dots,+2^{\lceil \log D \rceil}) \leq 4\cdot 2\cdot 2^{\lceil \log D \rceil} \leq 16D$. Hence the cost of the algorithm is in
 $O(D+\frac{D^2}{2^zr}(\log D + \log 1/r))$ in this subcase.\\

Subcase 2.2. $\frac{D}{r} > \frac{1}{\sin (2\pi/2^z)}$\\

In this case, we use the fact that the agent always sees the treasure while following the trajectory $\Pi_1$, until it completely traverses the sub-trajectory
$B(a,b)$, where $a = \lceil \log D \rceil$ and $b$ is the smallest even integer greater or equal to $\lceil \log \frac{1}{r}\rceil$. By Lemma \ref{cost-basic}, the length of $B(i,j)$ is at most $2\cdot 138(\frac{D'^2}{2^z r'}+D')$, where $D'=2^i$ and $r'=2^{-j}$. For any $i$, the length of  the trajectory $\Gamma[i]$ is the sum of the lengths of the trajectories $B(i,2), B(i-1,4),\dots,B(1,2i)$, which is at most $2\cdot \frac{138}{2^z}\{[2^{(2i+2)}+2^{(2(i-1)+4)}+2^{(2(i-2)+6)}\dots+2^{(2+2i)}] + [2^i+2^{i-1}+,\dots,+2^2+2^1]\} \leq 2\cdot \frac{138}{2^z}[i\cdot 2^{2i+2} + 2\cdot 2^i]$. The agent sees the treasure while following the trajectory $\Pi_1$ in the smallest phase $p$ such that $2^p$ is at least the sum of lengths of trajectories $\Gamma[1],\Gamma[2],\dots,\Gamma[a+\frac{b}{2}-1]$. Since the agent also follows the trajectory $\Pi_2$ in each phase and backtracks every time, the cost of the algorithm in this case is at most $16[2\cdot 2\cdot \frac{138}{2^z}(i\cdot 2^{2i+2}+2\cdot2^i)] \leq 64 \cdot \frac{138}{2^z} \cdot 4[i \cdot 2^{2i}+2^i]$, where $i=a+\frac{b}{2}-1$. By definition, $2^a \leq 2D$ and $2^b\leq 4/r$. Hence $2^{2i} \leq 2^{2a+b} \leq 16\frac{D^2}{r}$. Since $2^i \geq 1$, the cost of the algorithm is at most  $64 \cdot \frac{138}{2^z} \cdot 4[i \cdot 2^{2i}+2^i] \leq 64 \cdot \frac{138}{2^z} \cdot 4[(i+1) \cdot 2^{2i}] \leq 64 \cdot \frac{138}{2^z} \cdot 4[(i+1) \cdot 16D^2/r]$. Since $a=\lceil \log D \rceil$ and $\frac{b}{2} \leq \frac{1}{2} \lceil \log \frac{1}{r}\rceil + 1$, we have $(i+1) \in O(\log D + \log 1/r))$. Hence the cost is in $O(\frac{D^2}{2^z r}(\log D+\log \frac{1}{r}))$, and thus in $O(D+\frac{D^2}{2^zr}(\log D + \log 1/r))$.

Hence in all the cases, we get the upper bound $O(D+\frac{D^2}{2^zr}(\log D + \log 1/r))$ on the cost of our algorithm using advice of size $z$.
\end{proof}

The following lower bound shows that Algorithm {\tt Small vision} has cost optimal among treasure hunt algorithms using advice of size $z$.

\begin{theorem}\label{small-lb}
The cost of any treasure hunt algorithm using advice of size $z$, with unknown
bound $D$ on the distance between the starting position of the agent and the initial location of the treasure and unknown vision radius $r$ which is at most 1, is $\Omega(D+\frac{D^2}{2^zr}(\log D + \log 1/r))$. 
\end{theorem}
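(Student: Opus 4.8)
The plan is to derive the matching lower bound by combining Lemma~\ref{lb-medium}, which already yields the summand $\Omega\!\left(\frac{D^2}{2^zr}+D\right)$, with a multi-scale construction supplying the extra $\Theta(\log D+\log 1/r)$ factor. Why such a factor is forced is visible in the analysis of Algorithm~{\tt Small vision}: a single ``diagonal'' of the search matrix makes the agent traverse $\Theta(\log D+\log 1/r)$ separate spirals, each of cost of order $\frac{D^2}{2^zr}$. So I want a family of hard instances spread over $\Theta(\log D+\log 1/r)$ geometrically separated scales, all of the \emph{same} difficulty, which cannot be served except by doing the full amount of work at each scale.

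Concretely, fix a large even integer $c$ and put $M=2^{c}$. For each $i$ with $1\le i\le\lfloor c/2\rfloor$ consider the pair $(D_i,r_i)=(2^{i},\,2^{\,2i-c})$; then $r_i\le 1$, $D_i>r_i$, $D_i^{2}/r_i=M$ for all $i$, and $\log D_i+\log(1/r_i)=c-i\in\Theta(c)=\Theta(\log M)$. The candidate treasure positions at scale $i$ live in the pairwise disjoint annuli $\mathcal A_i=\{x:2^{i-1}<|x|\le 2^{i}\}$ around the start $P$; more precisely, inside the slightly shrunk annulus $\{x:2^{i-1}+r_i\le|x|\le 2^{i}-r_i\}$ I would place a set $\mathcal C_i$ of $\Theta(4^{i}/r_i^{2})=\Theta(M^{2}/4^{i})=\Omega(M)$ points that are pairwise at distance $\ge r_i$. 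Given an arbitrary treasure-hunt algorithm using advice of size $z$ --- equivalently $2^{z}$ fixed trajectories $T_1,\dots,T_{2^{z}}$ together with an advice value assigned to each treasure position --- I would, for each $i$, split $\mathcal C_i$ according to the advice value it receives, writing $\mathcal C_{i,s}$ for the points of $\mathcal C_i$ getting advice $s$, so $\sum_s|\mathcal C_{i,s}|=|\mathcal C_i|$.

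The heart of the argument is a length estimate for each fixed advice value $s$. Correctness forces $T_s$ to come within $r_i$ of every point of $\mathcal C_{i,s}$; let $\ell_s$ be the length of the shortest prefix of $T_s$ achieving this simultaneously for all $i$ and all such points (finite, since each $\mathcal C_{i,s}$ is finite), and let $\Gamma_{s,i}$ be the part of that prefix lying in $\mathcal A_i$. Since each $Q\in\mathcal C_{i,s}$ lies far enough from $\partial\mathcal A_i$ that the whole ball $B(Q,r_i)$ is contained in $\mathcal A_i$, the curve $\Gamma_{s,i}$ meets every ball $B(Q,r_i)$ with $Q\in\mathcal C_{i,s}$, so all of $\mathcal C_{i,s}$ lies within $r_i$ of $\Gamma_{s,i}$; since a curve of length $\lambda$ has at most $O(\lambda/r_i+1)$ pairwise $r_i$-separated points within distance $r_i$ of it (a standard tube/area bound: the $\rho$-neighbourhood of a length-$\lambda$ curve has area $\le 2\rho\lambda+\pi\rho^{2}$), we get $\mathrm{length}(\Gamma_{s,i})\ge\Omega\!\big(r_i|\mathcal C_{i,s}|\big)-O(r_i)$. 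Because the $\mathcal A_i$ are disjoint, $\ell_s\ge\sum_i\mathrm{length}(\Gamma_{s,i})$. Summing over the $2^{z}$ advice values, using $\sum_s|\mathcal C_{i,s}|=|\mathcal C_i|$ and $r_i|\mathcal C_i|=\Theta(M)$, with the accumulated $O(2^{z})$ error terms negligible once $M\gg 2^{z}$, yields $\sum_s\ell_s\ge\Omega(cM)$, hence some advice value $s^{\ast}$ with $\ell_{s^{\ast}}\ge\Omega(cM/2^{z})$. For that $s^{\ast}$, $\ell_{s^{\ast}}$ is realised as the first moment $T_{s^{\ast}}$ reaches within $r_{i^{\ast}}$ of a particular candidate $Q^{\ast}\in\mathcal C_{i^{\ast},s^{\ast}}$; putting the treasure at $Q^{\ast}$ with parameters $(D_{i^{\ast}},r_{i^{\ast}})$ --- an instance the oracle must label $s^{\ast}$ --- makes the cost equal to $\ell_{s^{\ast}}\ge\Omega(cM/2^{z})=\Omega\!\big(\tfrac{D_{i^{\ast}}^{2}}{2^{z}r_{i^{\ast}}}(\log D_{i^{\ast}}+\log 1/r_{i^{\ast}})\big)$. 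Adding the bound $\Omega(D_{i^{\ast}})$ from Lemma~\ref{lb-medium} for the same pair (legitimate since $r_{i^{\ast}}\le 1<0.9\,D_{i^{\ast}}$), and letting $c\to\infty$ so the parameters grow without bound, gives the claimed $\Omega\!\big(D+\tfrac{D^{2}}{2^{z}r}(\log D+\log 1/r)\big)$.

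The step I expect to be the main obstacle is making the per-scale estimates genuinely additive. Two points need care. First, the candidates at scale $i$ must be pushed far enough from $\partial\mathcal A_i$ that every ball $B(Q,r_i)$ stays inside the \emph{disjoint} annulus $\mathcal A_i$; only then does the portion of each trajectory inside $\mathcal A_i$ split cleanly across scales, so the tube bound can be applied scale by scale regardless of how the trajectory weaves between scales --- a plain ``each new visit costs $r_i$ of travel'' argument would not suffice, as it does not localise the travelled length to $\mathcal A_i$. Second, one must control the double counting, first over the $2^{z}$ advice values and then over the $\Theta(c)$ scales, and in particular verify that the $O(2^{z}r_i^{2})$ error terms from the tube bound are swallowed, which is exactly why $c$ must be taken large in terms of $z$; the routine feasibility conditions ($r_i\le 1$, $D_i>r_i$, the shrunk annulus non-empty, $|\mathcal C_i|=\Omega(M)$) hold for all $i$ in range once $c$ exceeds an absolute constant.
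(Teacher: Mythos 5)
Your overall strategy is essentially the one the paper uses: a multi-scale family of hard instances, one scale per disjoint region of the plane, with the parameters $(D_i,r_i)$ chosen so that $D_i^2/r_i$ is the same at every scale, a pigeonhole over the $2^z$ advice values within each scale, and additivity of the per-scale work over the $\Theta(\log D+\log 1/r)$ disjoint regions. The paper realizes the regions as disjoint squares $K_j$ placed along a diagonal and counts forced visits to $2r_j$-tiles; you realize them as concentric annuli and use an area (tube) bound. Your localization of the work to each region (every ball $B(Q,r_i)$ contained in $\mathcal A_i$, so the portion of the trajectory that serves $\mathcal C_{i,s}$ lies in $\mathcal A_i$) is actually spelled out more carefully than the corresponding step in the paper, which simply invokes disjointness of the $K_j$.

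There is, however, one step that does not hold as written. The set $\Gamma_{s,i}$ is the intersection of a trajectory prefix with the annulus $\mathcal A_i$ and can consist of arbitrarily many connected arcs; the bound ``the $\rho$-neighbourhood of a length-$\lambda$ curve has area at most $2\rho\lambda+\pi\rho^2$'' carries its $+\pi\rho^2$ term \emph{per component}. With your buffer of only $r_i$ from $\partial\mathcal A_i$, a component can enter $\mathcal A_i$, graze $B(Q,r_i)$ (whose closure may touch $\partial\mathcal A_i$) and leave with arbitrarily small length, so in principle each candidate could be served by its own near-zero-length arc, and the claimed inequality $\mathrm{length}(\Gamma_{s,i})\ge\Omega\bigl(r_i|\mathcal C_{i,s}|\bigr)-O(r_i)$ does not follow from the cited single-curve bound. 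The repair is easy: keep the candidates at distance at least $3r_i$ (say) from $\partial\mathcal A_i$. Then every component of $\Gamma_{s,i}$ that meets some $B(Q,r_i)$ must travel from $\partial\mathcal A_i$ to depth $2r_i$ and back, hence has length at least $4r_i$ (with the single exception of the component containing the prefix endpoint, which contributes $O(1)$ overall), so the per-component error is absorbed into a constant factor and the rest of your computation --- the summation over scales, the pigeonhole yielding $\ell_{s^\ast}=\Omega(cM/2^z)$, and the identification of the witness instance $(Q^\ast,D_{i^\ast},r_{i^\ast})$ --- goes through.
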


\begin{proof}
The obvious lower bound on the cost of any treasure hunt algorithm is $D-r$. Since $r \leq 1$, the cost of any treasure hunt algorithm must be at least $D-1$. Hence, the cost is in $\Omega(D)$. Thus it is enough to prove the lower bound $\Omega(\frac{D^2}{2^zr}(\log D + \log 1/r))$. We prove the following claim.

{\bf Claim.} Consider a square $S$ of side $x$ such that $x = 2kr$, where $k$ is a positive integer, with sides horizontal or vertical. If the treasure is hidden in the square $S$ and vision radius is $r$, then the cost of any treasure hunt algorithm using advice of size $z$ is at least $\frac{1}{16}\cdot \frac{x^2}{2^z r}$.   \\
In order to prove the claim,
partition the square $S$ into square tiles of size $2r$ with sides horizontal or vertical. Tile rows are indexed $1,2, \dots$ from the North side of $S$ going South and tile columns are indexed $1,2,\dots$ from the West side of $S$ going East. The number of tiles included in the square $S$ is $\frac{x^2}{4 r^2}$. Consider the center of every other tile of odd-indexed tile rows in $S$ as possible locations of the treasure (we call them special tiles in the rest of the proof). The number of such possible locations is at least $\frac{1}{4}\cdot \frac{x^2}{4 r^2}=\frac{1}{16}\cdot \frac{x^2}{r^2}$. Using advice of size $z$, we have $2^z$ different advice strings. By the Pigeonhole Principle, there are at least $N= \frac{1}{16}\cdot \frac{x^2}{2^z r^2}$ special tiles corresponding to the same advice string. Since the side of a tile is $2r$, in order to see the treasure at the center of a tile, the agent has to be inside this tile. Hence, if the treasure is at the center of one of the $N$ special tiles, the agent must visit each of these tiles. Since the distance between any two special tiles is at least $r$, the cost of any treasure hunt algorithm using advice of size $z$ must be at least $\frac{1}{16}\cdot \frac{x^2}{2^z r}$. This proves the claim.

For any positive integer $i$, consider the squares $Q_{j}$ with side $2^{j+1}$, for $j=0,1,2,\dots,i$, centered at $P$, with sides vertical or horizontal. Denote $R_j = Q_j \setminus Q_{j-1}$, for $j=1,2,\dots, i$. Consider the sub-square $K_j$ of $R_j$ such that the North-West corner of $K_j$ is the same as the North-West corner of $Q_j$ and the South-East corner of $K_j$ is the same as the North-West corner of $Q_{j-1}$. For each square $K_j$,  consider couples of integers $(D_j,r_j)$, where $D_j=\sqrt{2}\cdot 2^j$ and $r_j=\frac{1}{2^{2(i-j)}}$, for $j=1,2,\dots, i$.

Notice that all the points of the square $K_j$ are at distance at most $D_j$ from the point $P$,  and the side of the square $K_j$ is $2^j/2$. Suppose that the target is hidden in a square $K_j$, with vision radius $r_j$, for $j=1,2, \dots, i$. By the claim, for each square $K_j$, the cost of any treasure hunt algorithm using advice of size $z$ is at least $\frac{1}{16}\cdot \frac{2^{2j}}{4\cdot 2^z r_j} = \frac{1}{64} \cdot \frac{2^{2j}}{2^z r_j}$. Since $r_j=\frac{1}{2^{2(i-j)}}$,  the cost is at least $\frac{1}{64}\cdot \frac{2^{2i}}{2^z}$. Since squares $K_j$ are pairwise disjoint, the cost of any algorithm that accomplishes treasure hunt in every square $K_j$, using advice of size $z$, is at least $\frac{1}{64}\cdot i \cdot \frac{2^{2i}}{2^z}$. Since $D_j=\sqrt{2}\cdot 2^j$ and $r_j=\frac{1}{2^{2(i-j)}}$, we have $\frac{D_j^2}{r_j} = 2\cdot 2^{2j}\cdot 2^{2(i-j)} = 2\cdot 2^{2i}$. Since $D_j=\sqrt{2}\cdot 2^j$, we have $\log D_j = \frac{1}{2}+j$. Since $r_j=\frac{1}{2^{2(i-j)}}$, we have $\log \frac{1}{r_j} = 2(i-j) $.

Hence, we have $\frac{D_j^2}{r_j} (\log D_j + \log \frac{1}{r_j}) = 2\cdot 2^{2i}\cdot (\frac{1}{2}+j + 2(i-j))$. Since $j > 1/2$, we have $ 2\cdot 2^{2i}\cdot (\frac{1}{2}+j + 2(i-j)) \leq 2\cdot 2^{2i}\cdot 2i$. Hence, we have $2^{2i}\cdot  i \geq \frac{1}{4}\cdot \frac{D_j^2}{r_j} (\log D_j + \log \frac{1}{r_j})$. Hence, the cost of the algorithm is at least $\frac{1}{64}\cdot i \cdot \frac{2^{2i}}{2^z} \geq \frac{1}{64}\cdot \frac{1}{2^z} \cdot \frac{1}{4}\cdot \frac{D_j^2}{r_j} (\log D_j + \log \frac{1}{r_j})= \frac{1}{256} \cdot \frac{D_j^2}{2^z r_j}(\log D_j + \log \frac{1}{r_j})$. Together with the previously observed lower bound $\Omega(D)$, this implies that the cost of any treasure hunt algorithm using advice of size $z$ is $\Omega(D+\frac{D^2}{2^zr}(\log D + \log 1/r))$. 
\end{proof}

Theorems \ref{small-vision-cost} and \ref{small-lb} imply the following corollary.

\begin{corollary}
Suppose that the treasure is at distance at most $D$ from the initial position of the agent and the vision radius $r$  is at most 1.
The cost of Algorithm {\tt Small vision}, using advice of size $z$ is $O(OPT(z,D,r))$, where $OPT(z,D,r)$ is the cost of the optimal algorithm using advice of size $z$.
\end{corollary}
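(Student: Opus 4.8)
The plan is to derive the corollary directly by combining the matching upper and lower bounds just established, since the two theorems have been stated precisely so that their bounds coincide asymptotically. First I would invoke Theorem~\ref{small-vision-cost}, which guarantees that for every $D$ and every $r \leq 1$, an agent executing Algorithm {\tt Small vision} with advice of size $z$ finds the treasure at cost $O(D+\frac{D^2}{2^zr}(\log D + \log 1/r))$. Next I would invoke Theorem~\ref{small-lb}, which states that \emph{any} treasure hunt algorithm using advice of size $z$, when $D$ and $r \leq 1$ are unknown, has cost $\Omega(D+\frac{D^2}{2^zr}(\log D + \log 1/r))$; in particular the optimal such algorithm has cost $OPT(z,D,r) = \Omega(D+\frac{D^2}{2^zr}(\log D + \log 1/r))$.

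Putting these two facts together, the cost of Algorithm {\tt Small vision} is $O(D+\frac{D^2}{2^zr}(\log D + \log 1/r)) = O(OPT(z,D,r))$, which is exactly the claim. The one point deserving a sentence of care is that the quantity $OPT(z,D,r)$ is, by its definition in the introduction, the infimum of costs over all algorithms taking only an advice string of length $z$ as input, for the \emph{fixed} parameters $D$ and $r$; the lower bound of Theorem~\ref{small-lb} is stated for unknown $D$ and $r$, so one should note that it yields, for each such pair of parameters (in fact for infinitely many pairs witnessing the bound), a lower bound on $OPT(z,D,r)$ of the stated order, which is all that is needed here.

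There is essentially no obstacle: the corollary is a formal consequence of the two preceding theorems, and the only thing to check is that the asymptotic expressions in the upper bound (Theorem~\ref{small-vision-cost}) and the lower bound (Theorem~\ref{small-lb}) are literally the same function of $z$, $D$, and $r$, namely $D+\frac{D^2}{2^zr}(\log D + \log 1/r)$, so that their ratio is bounded by a constant. Hence the proof is a two-line argument quoting Theorems~\ref{small-vision-cost} and~\ref{small-lb}.
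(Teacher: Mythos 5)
Your proposal is correct and coincides with the paper's own justification: the paper offers no separate argument, stating only that Theorems~\ref{small-vision-cost} and~\ref{small-lb} imply the corollary, i.e., the matching upper and lower bounds of order $D+\frac{D^2}{2^zr}(\log D + \log 1/r)$ yield $O(OPT(z,D,r))$ immediately. Your additional remark about reconciling the ``unknown $D$ and $r$'' phrasing of the lower bound with the fixed-parameter definition of $OPT(z,D,r)$ is a point the paper leaves implicit, but it does not change the argument.
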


\section{Medium vision radius}
In this section, we consider the case of medium vision radius, i.e. $1 < r < 0.9D$.  Our aim is to design an almost optimal scheme of treasure hunt algorithms using canonical
advice of size $z\geq 0$: for any fixed real $\alpha >0$, which is given as input, the cost of the algorithm will be  $O(OPT(z,D,r)^{1+\alpha})$, where $OPT(z,D,r)$ is the cost of the optimal algorithm using advice of size $z$ (for unknown upper bound $D$ on the distance from the initial position of the agent to the treasure, and unknown vision radius $r$).
Let $C(z,D,r)$ be the cost of Algorithm {\tt Basic Traversal} with input $D$ and $r$, and using advice of size $z$. 
 For a given constant $\alpha>0$, our algorithm (not knowing $D$ or $r$, and only using advice of size $z$) will work at cost $O(C(z,D,r)D^{\alpha})$.
 We will show that this cost is in
 $O(OPT(z,,D,r)^{1+\alpha})$, as desired.

Let $z$ be a non-negative integer.
Fix a constant $\alpha > 0$, Let  $c=\lceil 1/\alpha \rceil$. 
Consider the infinite matrix $A$ whose rows and columns are numbered by consecutive positive integers. Let $A[i,j]$ be the entry in the $i$th row and $j$th column of $A$.  
Let $s$ be the smallest integer such that $2^s \geq 2 \cdot 2\cdot 138\cdot 800$. Hence $s=20$.  The entry $A[i,j]$ of the matrix $A$ is defined as the execution of 
Algorithm {\tt Basic Traversal} with parameters $D'=2^{js}$ and $r'=2^i$, and using advice of size $z$.
We only consider entries $A[i,j]$ such that $i\leq js$
because we assume that the radius vision does not exceed the distance between the treasure and the initial position of the agent.
Hence, for any $j$, we define the $j$th column of $A$ as the sequence of entries $A[1,j],A[2,j],\dots,A[js,j]$.
   
 Consider the $j$th column of the matrix $A$ such that $j \geq c/s$. This column has $js\geq c$ entries. Let $p$ and $q$ be the two positive integers such that $js = p\lfloor js/c \rfloor +q \lceil j/sc \rceil $ and $c=p+q$. 
Partition the $j$th column into $c$ segments, $p$ of them of length $\lfloor js/c \rfloor$ and $q$ of them of length $ \lceil js/c \rceil$. The $p$ segments of 
length $\lfloor js/c \rfloor$ cover entries $A[1,j]$ to $A[p\lfloor js/c \rfloor,j]$ and the $q$ segments of length $ \lceil js/c \rceil$ cover the rest of the $j$th column. 
Next, we define the notion of a {\em  dot} in a given column. A dot is the lowest-indexed entry of each segment. Hence, any column $j \geq c/s$ contains $c$ dots.

Let $x=\lfloor \frac{js}{c} \rfloor$. The first $p$ dots in the $j$th column are entries $A[(l x +1),j]$, for $l=0,1,\dots (p-1)$, and the next $q$ dots in the $j$th column are the entries $A[(px+m(x+1)+1),j]$, for $m=0,1, \dots (q-1)$.


For any $k\leq  c$, define $Thread_k$ to be the sequence consisting of the $k$th dot in every column $j\geq c/s$ of the matrix $A$ (counting from 1 in the order of increasing row indices). For example, $Thread_1$ is the sequence of dots $A[1,j]$, for $j\geq c/s$, cf. Fig. \ref{matrix}). 
We will consider the entries of each thread in the order of increasing column indices $j\geq c/s$. 

Our scheme of treasure hunt algorithms is formalized as a single Algorithm {\tt Medium vision} which, apart from the canonical advice, gets as input a real parameter $\alpha>0$ that  will determine how close to optimal is the algorithm cost.
The high-level idea of Algorithm {\tt Medium vision} is to fill the dots in a carefully chosen order, where filling a dot  $A[i,j]$ means executing Algorithm {\tt Basic Traversal} 
with advice $z$ and inputs $D'=2^{js}$ and $r'=2^i$, and backtracking to the initial position $P$ using the reverse trajectory. Dots are filled in this order until the treasure is found. We will show that this happens at the latest, at the time when a particular dot, depending on the unknown parameters $D$ and $r$, is filled. This dot will be called {\em special}. The order of filling the dots is chosen in such a way that  the total cost incurred until the special dot is filled approximates well the optimal cost $OPT(z,D,r)$. The factor $\Theta(D^\alpha)$ that separates our solution from the optimal cost is due to the fact that instead of executing Algorithm {\tt Basic Traversal} for parameters $D$ and $r$, we execute it for some parameters $D^*$ and $r^*$ corresponding to the special dot, where $D^*$ is approximately $D$ and $0 < r- r^*$ is approximately $\alpha \log D$.

We now give a detailed description of the algorithm. The algorithm works in phases. Each phase is started by filling the first non-filled dot in $Thread_c$. Suppose that this dot is $A[i,j]$. Let $D_j=2^{js}$ and $r_i=2^i$. The agent gets the {\em budget} $B(z,D_j,r_i)=2 \cdot 138\cdot(\frac{D_j^2}{2^zr_i}+D_j)$ for each dot in this phase.
The agent fills the dot $A[i,j]$ (which is within the budget) and tries to fill the yet unfilled dots in each $Thread_k$, for $k<c$, as follows.
Let $j_k$ be the column such that the first yet unfilled dot in $Thread_k$ is in column $j_k$. The agent tries to fill consecutive dots in $Thread_k$, starting from column $j_k$ in order of increasing columns. Whenever filling a given dot $A[i',j']$ is within the budget, i.e., $2 
\cdot C(z,D_{j'},r_{i'})\leq B(z,D_j,r_i)$,  the agent fills  the given dot. Otherwise, the dot remains unfilled in this phase.  (Note that the agent can compute $C(z,D_{j'},r_{i'})$ for any parameters $D_{j'},r_{i'}$ by simulating the execution of Algorithm {\tt Basic Traversal}, for these inputs and for advice of size $z$).  This ends the phase. The first phase starts by filling the first dot in $Thread_c$.

\begin{figure}[tp]
\centering
\includegraphics[scale=0.6]{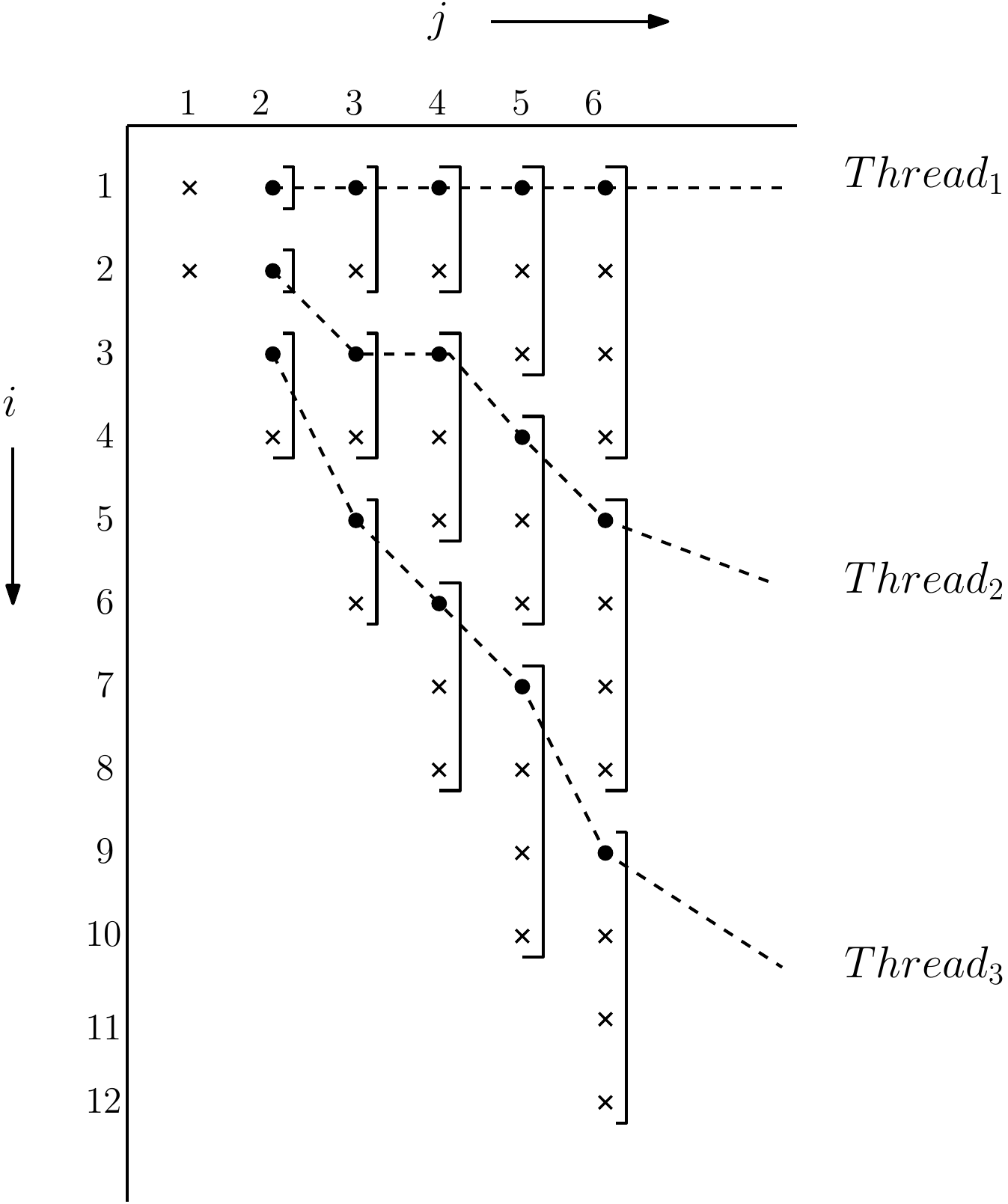}
\caption{The matrix $A$, dots and threads, for $c=3$ and $s=2$.}
\label{matrix}
\end{figure}

We formulate the algorithm {\tt Medium vision} as follows, using the notion of filling dots in the matrix $A$, described above. 
The input of the algorithm is a positive real $\alpha$, and the algorithm works with advice of size $z$.
The algorithm is interrupted when the agent sees the treasure.

\begin{center}
\fbox{
\begin{minipage}{15cm}

\noindent
{\bf Algorithm} {\tt Medium vision}

\noindent
$p:=1$\\
$c:=\lceil 1/\alpha \rceil$\\
 {\bf repeat}\\
\hspace*{1cm}Let $A[i,j]$ be the first unfilled dot in $Thread_c$. Let  $B(z,D_j,r_i)=2 \cdot138\cdot(\frac{D_j^2}{2^zr_i}+D_j)$\\
\hspace*{1cm}Fill the dot $A[i,j]$\\
\hspace*{1cm}{\bf for} $k:=(c-1)$ {\bf down to} 1 {\bf do}\\
\hspace*{2cm}$t:=0$\\
\hspace*{2cm}Let $j_k$ be the column number of the first yet unfilled dot in $Thread_k$.\\
\hspace*{2cm}Let $i_t$ be the row number of the dot in $Thread_k$ and column $j_k+t$.\\
\hspace*{2cm}{\bf while} $2 \cdot C(z,2^{(j_k+t)s},2^{i_t}) \leq B(z,D_j,r_i)$ {\bf do}\\
\hspace*{3cm}fill dot $A[i_t,j_k+t]$\\
\hspace*{3cm}$t:=t+1$\\ 
\hspace*{1cm}$p:= p+1$

\end{minipage}
}
\end{center}

We first prove the correctness of Algorithm {\tt Medium vision}.

\begin{lemma}
Let $1 < r < 0.9D$, where the treasure is at distance at most $D$ from the initial position of the agent and $r$ is the vision radius.
 Algorithm {\tt Medium vision} correctly finds the treasure.
\end{lemma}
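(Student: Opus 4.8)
The plan is to show that the agent is guaranteed to \emph{fill} a particular dot --- the \emph{special dot} --- at the latest, and that filling this dot causes the agent to see the treasure. First I would identify the special dot. Let $j^*$ be the smallest integer such that $2^{j^*s}\ge D$, so that $D_{j^*}=2^{j^*s}$ is an upper bound on the distance between $P$ and the treasure, and let $i^*$ be the largest integer with $2^{i^*}\le r$, so that $r_{i^*}=2^{i^*}\le r$ and hence the vision radius is at least $r_{i^*}$. One checks that $i^*\le j^*s$ (since $r<D\le 2^{j^*s}$), so the entry $A[i^*,j^*]$ is a legitimate entry of the truncated matrix $A$; moreover $j^*\ge c/s$ for all but finitely many values, and the boundary cases can be absorbed into the $O(\cdot)$ by direct inspection, so $A[i^*,j^*]$ lies on some thread, say $Thread_{k^*}$ for some $k^*\le c$. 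Call $A[i^*,j^*]$ the special dot.

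Next I would argue correctness conditional on filling the special dot. Filling $A[i^*,j^*]$ means executing Algorithm \texttt{Basic Traversal} with parameters $D'=D_{j^*}\ge D$ and $r'=r_{i^*}\le r$. By Lemma~\ref{cost-basic}, Algorithm \texttt{Basic Traversal} with an upper bound $D'$ on the distance and a vision radius \emph{at most} the true radius correctly finds the treasure: the trajectory it produces passes within $r'\le r$ of every point of the relevant region, which contains the treasure since $D'\ge D$. Hence, if the agent ever reaches the point in its execution where it fills the special dot (or any dot with parameters $D'\ge D$, $r'\le r$), it sees the treasure no later than that moment. So it remains only to show that the special dot is eventually filled.

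The remaining, and main, point is to show that the dot-filling schedule actually reaches the special dot. The argument I would give is: $Thread_c$ is filled one dot per phase, column by column, so after finitely many phases the current first unfilled dot of $Thread_c$ is in column $\ge j^*$, and the budget $B(z,D_j,r_i)$ of that phase is monotone enough that it eventually exceeds $2\cdot C(z,D_{j^*},r_{i^*})$ --- here I would invoke that $C(z,D',r')\le 138(\tfrac{D'^2}{2^zr'}+D')$ by Lemma~\ref{cost-basic}, and that the budget in phase $p$, being $2\cdot138(\tfrac{D_j^2}{2^zr_i}+D_j)$ for the $p$-th dot $A[i,j]$ of $Thread_c$, grows without bound as $j\to\infty$ (the thread visits every column $j\ge c/s$). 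Then, within the inner \textbf{while} loop for $k=k^*$ in a sufficiently late phase, the agent processes $Thread_{k^*}$ from its first unfilled column forward, and as long as each successive dot is within budget it fills it; I would check that the per-dot costs along a thread are non-decreasing in the column index (larger $D_j$, and row index growing proportionally so $r_i$ does not shrink fast enough to compensate), so once the budget is large enough to afford the special dot it is large enough to afford every dot of $Thread_{k^*}$ up to and including column $j^*$, and hence the loop does not terminate before filling $A[i^*,j^*]$. The delicate step --- and the one I expect to be the real obstacle --- is precisely this monotonicity/telescoping bookkeeping: verifying that the greedy ``fill as far as the budget allows'' rule on each thread, combined with the phase structure driven by $Thread_c$, never ``skips over'' the special dot, i.e.\ that the first unfilled dot of $Thread_{k^*}$ cannot jump past column $j^*$ before a phase with a budget $\ge 2\cdot C(z,D_{j^*},r_{i^*})$ occurs. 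This will follow from the definition of the dots (the $k$-th dot in column $j$ has row index $\Theta(js/c)$, so along a fixed thread $r_i$ and $D_j$ both grow geometrically and the ratio $D_j^2/(2^zr_i)$ is monotone), but it requires care to state cleanly, and it is where the bulk of the argument lies.
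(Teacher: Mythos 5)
Your proposal follows essentially the same route as the paper's own proof: identify the entry $A[i^*,j^*]$ with $2^{j^*s}\ge D$ and $2^{i^*}\le r$, observe that filling it runs Algorithm {\tt Basic Traversal} with an over-estimate of $D$ and an under-estimate of $r$ and therefore finds the treasure, and note that the phase budgets grow without bound while threads are processed consecutively without skipping, so this dot is eventually filled. The only slip is your assertion that $A[i^*,j^*]$ itself lies on a thread --- in general it need not be a dot (only $c$ of the $j^*s$ entries of column $j^*$ are), so one must, as the paper does when defining the special dot in its cost analysis, drop to the largest row $t\le i^*$ for which $A[t,j^*]$ is a dot; this only decreases $r'$ and leaves the rest of your argument intact.
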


\begin{proof}
Let $j$ be the smallest integer such that $2^{js} \geq D$ and let $i$ be the largest integer such that $2^i \leq r$. Filling the dot $A[i,j]$ corresponds to executing
Algorithm {\tt Basic Traversal} with parameters $D'=2^{js}$, and $r' = 2^i$, using canonical advice of size $z$, and backtracking. 
Since in consecutive phases $p$, consecutive dots in $Thread_c$ are filled, the budget available in phase $p$ grows to infinity with $p$. Hence, for some phase $p$, the budget is sufficient to fill dot $A[i,j]$. 
Since $D'\geq D$ and   
$r'\leq r$, the agent must see the treasure by the end of the execution of the Algorithm {\tt Basic Traversal} corresponding to filling this dot.  
\end{proof}

The following lemma shows that the cost of filling consecutive dots of a given thread grows exponentially.

\begin{lemma}\label{telescope}
For a given positive integer $k \leq c$, and a given integer $d>1$, the cost of filling the $d$th dot of $Thread_k$ is at least 2 times larger than the cost of filling the $(d-1)$th dot of $Thread_k$. 
\end{lemma}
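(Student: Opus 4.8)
The plan is to reduce the statement to a combinatorial fact about the placement of dots and then to sandwich the cost of filling a dot between the bounds already proved in Lemmas~\ref{cost-basic} and~\ref{lb-medium}. Recall that the entries of $Thread_k$ are the $k$th dots of the successive columns $\lceil c/s\rceil,\lceil c/s\rceil+1,\dots$, so the $(d-1)$th and the $d$th dot of $Thread_k$ lie in two consecutive columns $j-1$ and $j$ (with $j=\lceil c/s\rceil+d-1$). Writing $D'=2^{(j-1)s}$, $D=2^{js}=2^sD'$, $r'=2^{i'}$, $r=2^{i}$, where $i'$ and $i$ are the rows of the $k$th dot in columns $j-1$ and $j$, filling these two dots means running Algorithm {\tt Basic Traversal} with inputs $(D',r')$, resp.\ $(D,r)$, and advice of size $z$, and then backtracking along the reverse trajectory; hence their filling costs are $2C(z,D',r')$ and $2C(z,D,r)$, and it suffices to prove $C(z,D,r)\ge 2\,C(z,D',r')$.

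The heart of the argument is the claim that the row of the $k$th dot grows by at most $s$ from one column to the next, i.e.\ $i-i'\le s$. The row of the $k$th dot in a column $m$ equals $1$ plus the total length of the first $k-1$ blocks of the as-balanced-as-possible partition of the $ms$ entries of column $m$ into $c$ blocks, the $\lfloor ms/c\rfloor$-blocks coming first and the $\lceil ms/c\rceil$-blocks last. So the claim amounts to: the total length of the first $\ell$ blocks increases by at most $s$ when $m$ increases by $1$, for every $\ell\le c$. I would prove this by a case analysis comparing, for the two columns $j-1$ and $j$, whether $\lfloor(j-1)s/c\rfloor$ and $\lfloor js/c\rfloor$ differ, and whether $\ell$ lies at or below, or above, the number $p_m=c-(ms\bmod c)$ of short blocks in each of them; in the resulting cases the difference of cumulative lengths telescopes to either $s+(\ell-p_{j-1})$ with $\ell<p_{j-1}$, or $p_{j-1}\le s$, or something trivially $\le s$. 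This is the step I expect to be the main obstacle: the naive bound $(k-1)(\lceil s/c\rceil+2)$ exceeds $s$ once $c$ is large, so the gain must come precisely from the telescoping forced by the balanced partition, and the bookkeeping with floors, ceilings and the short/long split is where the real work lies.

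Granting $i-i'\le s$, we have $r=2^{i}\le 2^{i'+s}=2^sr'$ while $D=2^sD'$, so for $F(x,y):=\tfrac{x^2}{2^zy}+x$ we get $F(D,r)\ge 2^sF(D',r')$, since $\tfrac{D^2}{2^zr}\ge\tfrac{2^{2s}D'^2}{2^z\cdot 2^sr'}=2^s\,\tfrac{D'^2}{2^zr'}$ and $D=2^sD'$. Moreover, since $d>1$ the column $j$ satisfies $js>c$, so its last block has length $\lceil js/c\rceil\ge 2$ and the row of any dot of column $j$ is at most $js-1$; hence $r=2^{i}\le D/2<0.9D$, and Algorithm {\tt Basic Traversal} with parameters $(D,r)$ is a correct treasure hunt algorithm using advice of size $z$ (Lemma~\ref{cost-basic}), so Lemma~\ref{lb-medium} (whose bound on the worst case over treasure positions at distance $\le D$ is, a fortiori, a lower bound on the full trajectory length) gives $C(z,D,r)\ge\tfrac{1}{800}F(D,r)$. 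On the other hand Lemma~\ref{cost-basic} gives $C(z,D',r')\le 138\,F(D',r')$ unconditionally. Combining,
\[
C(z,D,r)\ \ge\ \tfrac{1}{800}F(D,r)\ \ge\ \tfrac{2^s}{800}\,F(D',r')\ \ge\ \tfrac{2^s}{800\cdot 138}\,C(z,D',r'),
\]
and by the choice of $s$ (which guarantees $2^s\ge 2\cdot 2\cdot 138\cdot 800$) the factor $\tfrac{2^s}{800\cdot 138}$ is at least $2$. Thus the cost $2C(z,D,r)$ of filling the $d$th dot is at least twice the cost $2C(z,D',r')$ of filling the $(d-1)$th dot. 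The only configuration not covered above is when the first dot of a thread would have $r'=D'$, which happens only when $s\mid c$ and only for $k=c$, $d=2$; this single case is disposed of by direct inspection, the $1$st dot carrying only a $\Theta(D')$-cost traversal that the much larger $2$nd dot dominates.
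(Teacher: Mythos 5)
Your proposal is correct and follows essentially the same route as the paper: bound the two filling costs by $138\,F$ from above (Lemma \ref{cost-basic}) and $\tfrac{1}{800}F$ from below (Lemma \ref{lb-medium}), show $F(D_{j'},r_{i'})\ge 2^sF(D_j,r_i)$, and invoke the choice of $s$. The one step you flag as the main obstacle --- that the row of the $k$th dot grows by at most $s$ from one column to the next --- is exactly the inequality $i'\le i+s$ that the paper asserts without proof; it is true, and besides your floor/ceiling case analysis there is a shorter argument: the sorted block lengths of column $j+1$ dominate those of column $j$ entrywise, so the difference of the cumulative sums at any prefix is at most the difference of the totals, which is $s$. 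Your extra checks (that $r_{i'}<0.9D_{j'}$ so Lemma \ref{lb-medium} applies, and the degenerate first-dot case) are diligence the paper omits and do not change the argument.
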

\begin{proof}
Let $A[i,j]$ be the entry corresponding to the $(d-1)$th dot of $Thread_k$. The corresponding execution of Algorithm {\tt Basic Traversal} uses parameters
$D_{j}=2^{js}$, $r_i= 2^i$ and canonical advice of size $z$.

Let $A[i',j']$ be the entry corresponding to the $d$th dot of $Thread_k$. Since $j'=j+1$ and $i'\leq i+s$, the Algorithm {\tt Basic Traversal} is executed using parameters $D_{j'}=2^{(j+1)s} = 2^s\cdot 2^{js}$, and $r_{i'} \leq 2^{(i+s)}=2^s\cdot 2^i$ with advice of size $z$, during the filling of the $d$th dot of $Thread_k$. 
Since $\frac{D_j}{2^z r_i} = \frac{2^{js}}{2^z 2^i}$ and  $\frac{D_{j'}}{2^z r_{i'}} \geq \frac{2^s \cdot 2^{js}}{2^z \cdot 2^i \cdot 2^s}$, we have $\frac{D_{j'}}{2^z r_{i'}} \geq \frac{D_j}{2^z r_i}$. Hence, we have $\frac{D_{j'}}{2^z r_{i'}}+1 \geq \frac{D_j}{2^z r_i} +1$, and thus $(\frac{D_{j'}^2}{2^zr_{i'}}+D_{j'})/ (\frac{D_j^2}{2^zr_i}+D_j) \geq D_{j'}/ D_j = 2^s$. Hence, we have $(\frac{D_{j'}^2}{2^zr_{i'}}+D_{j'}) \geq 2^s(\frac{D_j^2}{2^zr_i}+D_j)$. 
By Lemma \ref{cost-basic}, we have  $C(z,D_j,r_i)\leq 138\cdot (\frac{D_j^2}{2^zr_i}+D_j)$, where $D_j= 2^{js}$ and $r_i=2^i$.
By Lemma \ref{lb-medium}, we have $C(z,D_{j'},r_{i'})\geq \frac{1}{800}(\frac{D_{j'}^2}{2^zr_{i'}}+D_{j'})$. Since  $(\frac{D_{j'}^2}{2^zr_{i'}}+D_{j'}) \geq 2^s(\frac{D_j^2}{2^zr_i}+D_j)$, we have $C(z,D_{j'},r_{i'}) \geq \frac{1}{800}(\frac{D_{j'}^2}{2^zr_{i'}}+D_{j'}) \geq \frac{2^s}{800}(\frac{D_j^2}{2^zr_i}+D_j)$. Since $C(z,D_j,r_i) \leq 138\cdot (\frac{D_j^2}{2^zr_i}+D_j)$, we have $C(z,D_{j'},r_{i'}) \geq \frac{2^s}{800} \cdot \frac{1}{138} \cdot C(z,D_j,r_i)$. Since $2^s \geq 2\cdot 138 \cdot 800$, we have $C(z,D_{j'},r_{i'}) \geq 2 \cdot C(z,D_j,r_i)$. This proves the lemma.
\end{proof}


For any dot $\Delta$, let $G(\Delta)$ denote the cost of filling this dot by Algorithm {\tt Medium vision}.
The next lemma shows that Algorithm {\tt Medium vision} fills dots in a cost-efficient order with respect to the cost of filling any given dot (up to multiplicative constants).

\begin{lemma}\label{phase-cost}
Let $\Delta$ be any dot that is filled by Algorithm {\tt Medium vision} in phase $p$. Then the cost of Algorithm {\tt Medium vision} until the end of phase $p$ is 
at most $2c \cdot 2^{2s}G(\Delta)$. 
\end{lemma}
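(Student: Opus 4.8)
The plan is to bound the cost incurred in phase $p$ by a constant multiple of $G(\Delta)$, and then show that the cost of all earlier phases is geometrically dominated, so that the whole cost up to the end of phase $p$ is $O(G(\Delta))$ with the explicit constant $2c\cdot 2^{2s}$.

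First I would recall how phase $p$ is structured: it begins by filling the first unfilled dot $A[i,j]$ of $Thread_c$, which sets the budget $B=B(z,D_j,r_i)=2\cdot138\cdot(\frac{D_j^2}{2^zr_i}+D_j)\ge 2\cdot C(z,D_j,r_i)$ by Lemma~\ref{cost-basic}. Every dot filled during phase $p$ — whether the $Thread_c$ dot itself or a dot in some $Thread_k$ with $k<c$ — costs at most $B$: for the $Thread_c$ dot this holds since $C(z,D_j,r_i)\le\frac12 B\le B$, and for all other dots this is precisely the \textbf{while}-condition $2\cdot C(z,D_{j'},r_{i'})\le B$ that gates each fill. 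Since there are $c$ threads and each contributes at most one newly filled dot per phase (for $k<c$ the \textbf{while} loop could in principle fill several dots of $Thread_k$, so I need to be a little more careful here — I would instead argue that within each thread the cost of filling a dot dominates, geometrically via Lemma~\ref{telescope}, the cost of the previously filled dots of that thread, so the total cost spent on any single thread in phase $p$ is at most twice the cost of the most expensive — i.e. last — dot filled on that thread in that phase, hence at most $2B$). Summing over the $c$ threads, the total cost of phase $p$ is at most $2cB \le 2c\cdot 2\cdot138\cdot(\frac{D_j^2}{2^zr_i}+D_j)$.

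Next I would relate $B$ to $G(\Delta)$. The dot $\Delta$ is filled in phase $p$; let $D_{j'}=2^{j's}$, $r_{i'}=2^{i'}$ be its parameters. By the \textbf{while}-condition (or, if $\Delta$ is the $Thread_c$ dot, trivially), $2\cdot C(z,D_{j'},r_{i'})\le B$; but I actually want a lower bound on $G(\Delta)=C(z,D_{j'},r_{i'})$ times a constant in terms of $B$ — no: I want an \emph{upper} bound on $B$ in terms of $G(\Delta)$, which does \emph{not} follow from the \textbf{while}-condition (that goes the wrong way). The right move is to observe that the $Thread_c$ dot $A[i,j]$ starting phase $p$ is itself among the dots filled in phase $p$, and for \emph{any} dot $\Delta$ filled in phase $p$ we have by Lemma~\ref{lb-medium} that $G(\Delta)=C(z,D_{j'},r_{i'})\ge\frac{1}{800}(\frac{D_{j'}^2}{2^zr_{i'}}+D_{j'})$; I then need $(\frac{D_j^2}{2^zr_i}+D_j)\le 2^{2s}(\frac{D_{j'}^2}{2^zr_{i'}}+D_{j'})$. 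This is where the bound $i\le js$ and the column structure of the threads enter: consecutive dots on the same thread are one column apart (so $D$ scales by $2^s$) and at most $s$ rows apart (so $r$ scales by at most $2^s$), exactly as in the proof of Lemma~\ref{telescope}; and the $Thread_c$ dot of phase $p$ is filled no earlier in its own thread-order than the earliest dot filled in phase $p$ among all threads is in that thread's order. The main obstacle is precisely pinning down this inter-thread comparison: I must show that whichever dot $\Delta$ is, the "budget-setting" quantity $\frac{D_j^2}{2^zr_i}+D_j$ is within a factor $2^{2s}$ of $\frac{D_{j'}^2}{2^zr_{i'}}+D_{j'}$. I would do this by noting that at the moment phase $p$ began, $\Delta$ was either filled that phase as a $Thread_k$ dot whose index difference from the previously-filled $Thread_k$ dot is bounded, combined with the fact that before phase $p$ all threads were "synchronized" up to bounded column offsets — i.e. the column indices of the first unfilled dots across all threads differ by at most a constant depending only on $c$ and $s$ — so the parameters of $\Delta$ and of $A[i,j]$ differ by at most $2^{2s}$ in the relevant ratio.

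Finally I would handle the earlier phases. By Lemma~\ref{telescope} the cost of filling the $Thread_c$ dot in phase $p'$ is at least twice that in phase $p'-1$; and by the same argument as above the total cost of phase $p'$ is at most a fixed constant times the cost of its $Thread_c$ dot, while the cost of phase $p'$ is at least the cost of that same dot. Hence the sequence of per-phase costs grows at least geometrically with ratio $\ge 2$, so the sum $\sum_{p'\le p}(\text{cost of phase }p')\le 2\cdot(\text{cost of phase }p)\le 2\cdot 2cB$; a slightly more careful bookkeeping of the constants (using $B\le 2^{2s}\cdot\frac{800}{1}\cdot\frac{1}{2\cdot138}\,G(\Delta)$-type estimates and absorbing the $2\cdot138\cdot800$ factors into $2^s$, exactly as $s=20$ was chosen) yields the claimed bound $2c\cdot 2^{2s}G(\Delta)$. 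The hard part will be the second paragraph — making the inter-thread parameter comparison airtight — rather than the geometric-sum argument, which is routine once Lemma~\ref{telescope} is in hand.
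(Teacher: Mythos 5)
Your first step (total cost attributable to each thread is at most twice the cost of the last dot filled on it, by Lemma~\ref{telescope}, hence the total is at most $2cB$ where $B$ is the phase-$p$ budget) is exactly the paper's argument, and your observation that the \textbf{while}-condition only gives $G(\Delta)\le B$, the wrong direction, is the right diagnosis of where the real work lies. But the route you propose for the missing inequality $B\le 2^{2s}G(\Delta)$ does not work. You want to argue that the first unfilled dots of the different threads stay ``synchronized up to bounded column offsets,'' so that the parameters of $\Delta$ and of the phase-$p$ $Thread_c$ dot $A[i,j]$ differ by at most a factor $2^{2s}$ in the quantity $\frac{D'^2}{2^zr'}+D'$. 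That synchronization claim is false. The threads advance at very different rates under a common budget: a dot of $Thread_1$ in column $j'$ costs roughly $2^{2j's-z}$ (its row is $1$, so $r'=2$), while the $Thread_c$ dot of column $j$ sits near row $js(1-1/c)$ and costs roughly $2^{js(1+1/c)-z}$. Equating these shows $Thread_1$ only reaches column about $j(1+1/c)/2$ when $Thread_c$ is at column $j$ --- the column indices differ by a multiplicative factor, not an additive constant, and the corresponding cost quantities can still coincide up to constants even though the raw parameters $(D',r')$ are wildly different. So a comparison based on parameters being close is hopeless.

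The paper's argument sidesteps this entirely by comparing to the \emph{previous} phase rather than across threads: since $\Delta$ is filled in phase $p$ and not earlier, and since within each thread the \textbf{while} loop fills dots in order of (increasing, by Lemma~\ref{telescope}) cost until one exceeds the budget, $\Delta$'s cost must exceed the budget $B'$ of phase $p-1$, i.e.\ $G(\Delta)>B'$. On the other hand the two consecutive $Thread_c$ dots that set $B'$ and $B$ lie in adjacent columns and within $s$ rows of each other, so $B\le 2^{2s}B'<2^{2s}G(\Delta)$. Combined with your (correct) first step this gives $2cB\le 2c\cdot 2^{2s}G(\Delta)$. Your final paragraph's geometric sum over phases is then unnecessary, since the per-thread telescoping already accounts for all dots filled in all phases up to $p$, not just phase $p$.
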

\begin{proof}
Let $\Delta_p$ be the dot in $Thread_c$ filled in phase $p$ of Algorithm {\tt Medium vision}. Suppose that the dot $\Delta_p$ is the entry $A[i,j]$ of the matrix.
Hence the budget $B$ available in phase $p$ is $2 \cdot 138\cdot(\frac{D_j^2}{2^zr_i}+D_j)$.
Let $\Delta_{p-1}$ be the dot in $Thread_c$ filled in phase $p-1$ of Algorithm {\tt Medium vision}. Suppose that the dot $\Delta_{p-1}$ is the entry $A[i',j']$ of the matrix. Hence the budget $B'$ available in phase $p-1$ is $2 \cdot 138\cdot(\frac{D_{j'}^2}{2^zr_{i'}}+D_{j'})$. 
Since the dot $\Delta$ is filled only in phase $p$, the cost $G(\Delta)$ of filling it must be larger than the budget $B'$, i.e. $B' < G(\Delta)$. 
By definition of $B$ and $B'$, we have 
$B \leq 2^{2s} \cdot B'$. Hence $B \leq 2^{2s} B' < 2^{2s} G(\Delta)$. 

Let $\delta_k$, for $k\leq c$, be the last dot in $Thread_k$ filled in phase $p$. The cost of filling this dot must be within the budget of phase $p$. Hence 
$G(\delta_k)\leq  B$. By Lemma \ref{telescope}, the cost of filling all dots in $Thread_k$ by the end of phase $p$ is at most $2G(\delta_k)$. Hence the total cost 
of Algorithm {\tt Medium vision} until the end of phase $p$ is at most $2c \cdot 2^{2s}G(\Delta)$.
\end{proof}

The following lemma estimates the cost of filling the special dot.

\begin{lemma}\label{special-dot}
Let $1 < r < 0.9D$, where the treasure is at distance at most $D$ from the initial position of the agent and $r$ is the vision radius.
Fix a size $z\geq 0$ of advice and fix an input $\alpha>0$ of Algorithm {\tt Medium vision}. 
Let $C(z,D,r)$ be the cost of Algorithm {\tt Basic Traversal} with input $D$ and $r$, and using advice of size $z$. 
Then the cost of filling the special dot by Algorithm {\tt Medium vision} is at most $2^{5s} \cdot C(z,D,r)D^\alpha$, where $s=20$.
\end{lemma}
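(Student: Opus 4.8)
The plan is to first identify the \emph{special dot} explicitly as the entry $A[i^*,j^*]$ of the matrix, where $j^*$ is the smallest integer with $2^{j^*s} \geq D$ and $i^*$ is the largest integer such that $2^{i^*} \leq r^* := r/D^{\alpha/c}$ (roughly; the exact choice will make $r - r^*$ be approximately $\alpha \log D$ as foreshadowed in the text, and will place the dot in some $Thread_k$ with $k \leq c$). First I would verify that this dot actually lies on a thread: since $j^*$ is essentially $\log D / s$ and the thread boundaries partition each column into $c$ segments of length about $j^*s/c$, the row index $i^*$ will fall on a dot provided $r/D^{\alpha/c}$ is chosen to match one of the $c$ dot positions $A[(lx+1),j^*]$ or $A[(px+m(x+1)+1),j^*]$; one selects the thread $k$ whose dot row is the largest row $\leq \log r^*$, incurring at most a bounded rounding loss in $r^*$ of a multiplicative constant $2^s$, absorbed into the final $2^{5s}$ factor.

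Next I would bound $C(z, D_{j^*}, r_{i^*})$, the cost of filling the special dot, against $C(z,D,r)$. By Lemma \ref{cost-basic}, $C(z,D_{j^*},r_{i^*}) \leq 138(\frac{D_{j^*}^2}{2^z r_{i^*}} + D_{j^*})$ with $D_{j^*} = 2^{j^*s} < 2^s D$ and $r_{i^*} = 2^{i^*} > r^*/2 \geq r/(2D^{\alpha/c})$. Hence $\frac{D_{j^*}^2}{2^z r_{i^*}} \leq \frac{2^{2s} D^2 \cdot 2 D^{\alpha/c}}{2^z r} = 2^{2s+1} D^{\alpha/c} \cdot \frac{D^2}{2^z r}$ and $D_{j^*} \leq 2^s D \leq 2^s D^{\alpha/c} \cdot D$ (using $D^{\alpha/c}\geq 1$). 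Combining, $C(z,D_{j^*},r_{i^*}) \leq 2^{2s+2} D^{\alpha/c}(\frac{D^2}{2^z r} + D)$, and by Lemma \ref{lb-medium} applied in the form $C(z,D,r) \geq \frac{1}{800}(\frac{D^2}{2^z r} + D)$ one gets $C(z,D_{j^*},r_{i^*}) \leq 800 \cdot 2^{2s+2} D^{\alpha/c} C(z,D,r)$. Since $c = \lceil 1/\alpha \rceil \geq 1/\alpha$, we have $\alpha/c \leq \alpha$, so $D^{\alpha/c} \leq D^\alpha$, and the constant $800 \cdot 2^{2s+2}$ is at most $2^{5s}$ for $s=20$.

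The main obstacle is the bookkeeping that connects the \emph{row} index at which the special dot sits to the value $r^*$, i.e., making precise that the dot row in the relevant thread $k$ of column $j^*$ is within $s$ (additively) of $\lfloor \log(r/D^{\alpha/c})\rfloor$; this requires tracking how the $c$ dots are spaced in column $j^*$ — spacing $\lfloor j^*s/c\rfloor$ or $\lceil j^*s/c\rceil$ — and checking that $\log D^{\alpha/c} = (\alpha/c)\log D \approx (\alpha/c)\cdot j^*s$ corresponds to roughly $\alpha/c$ times a full column, hence roughly one inter-dot gap scaled appropriately, so that decreasing $r$ to $r^*$ moves down by approximately one dot position. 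One must also confirm the special dot is genuinely filled by the algorithm, i.e., that it lies on a column $j^* \geq c/s$, which holds once $D$ is large enough; for bounded $D$ the claimed bound follows trivially from $C(z,D_{j^*},r_{i^*})$ being a constant. Everything else is the routine substitution of Lemmas \ref{cost-basic} and \ref{lb-medium} carried out above.
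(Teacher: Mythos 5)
There is a genuine gap, and it sits exactly at the crux of the lemma. You define $r^*=r/D^{\alpha/c}$ and then take the special dot to be the largest dot row at or below $\log_2 r^*$ in column $j^*$, claiming that this rounding costs ``at most a bounded rounding loss in $r^*$ of a multiplicative constant $2^s$'' (equivalently, that the dot row is within $s$ rows, additively, of $\lfloor\log_2 r^*\rfloor$). That claim is false: consecutive dots in column $j^*$ are separated by $\lfloor j^*s/c\rfloor$ or $\lceil j^*s/c\rceil$ rows, and since $2^{j^*s}\approx D$ this spacing is about $(\log_2 D)/c$ rows, i.e., a multiplicative factor of about $D^{1/c}$ in the $r$-parameter --- not a constant. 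This is precisely where the $D^{\alpha}$ in the statement has to come from. The paper's proof makes no detour through $r^*$: it takes $i$ to be the largest integer with $2^i\leq r$, lets $t$ be the largest dot row with $t\leq i$, observes $i-t\leq\lceil j s/c\rceil$, and converts that gap into the factor $2^{js/c}\leq 2^{s/c}D^{1/c}\leq 2^{s/c}D^{\alpha}$. In your version you spend a factor $D^{\alpha/c}$ on the artificial reduction from $r$ to $r^*$ \emph{and} still owe the genuine dot-spacing factor $\approx D^{1/c}$ when you round down to an actual dot; honestly accounted, your bound becomes $D^{(\alpha+1)/c}$ times constants, which is not $D^{\alpha/c}\leq D^{\alpha}$ as you assert (it can be as large as roughly $D^{2\alpha}$). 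So as written the proof either relies on the false ``within $s$ rows'' claim or proves a weaker exponent than the lemma states.

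The rest of your argument is fine and matches the paper: the column choice ($j^*$ smallest with $2^{j^*s}\geq D$, so $D_{j^*}\leq 2^sD$), the use of Lemma \ref{cost-basic} for the upper bound on the cost of a dot and Lemma \ref{lb-medium} in the form $C(z,D,r)\geq\frac{1}{800}(\frac{D^2}{2^zr}+D)$, and the final step $1/c\leq\alpha$. The fix is simply to delete $r^*$: take the special dot to be the largest dot row $t$ with $2^t\leq r$ in column $j^*$, accept that $2^{i-t}\leq 2\cdot 2^{j^*s/c}\leq 2\cdot 2^{s/c}D^{1/c}\leq 2\cdot 2^{s/c}D^{\alpha}$, and carry that single factor through the substitution you already wrote. (Your observation about the edge case $j^*<c/s$, where the column has fewer than $c$ entries and the dot structure is undefined, is a legitimate point that the paper glosses over, but it does not repair the main issue.)
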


\begin{proof}
Let $j$ be the smallest integer such that $2^{js} \geq D$ and let $i$ be the largest integer such that $2^i \leq r$. Let $A[i,j]$ be the entry of the matrix $A$ corresponding to integers $i$ and $j$. Since $ 2^{(j-1)s} <D  \leq 2^{js}$, we have $2^{js} \leq 2^s D$. Since $ 2^i \leq r < 2^{i+1}$, we have $2^i \geq r/2$. The cost $C(z,2^{js},2^i)$ of Algorithm {\tt Basic Traversal} with inputs $2^{js}$ and $2^i$, and using advice of size $z$, is at most $138\cdot(\frac{(2^{js})^2}{2^z 2^i}+2^{js})$, in view of Lemma \ref{cost-basic}. Since $2^{js} \leq 2^s D$ and $2^i \geq r/2$, we have $C(z,2^{js},2^i) \leq 138 \cdot (\frac{2 \cdot 2^{2s} D^2}{2^z r}+2^s D) \leq 2 \cdot 2^{2s} \cdot 138 (\frac{D^2}{2^z r} +D)$. By Lemma \ref{lb-medium}, we have $C(z,D,r) \geq \frac{1}{800}(\frac{D^2}{2^z r}+D)$. Hence $C(z,2^{js},2^i) \leq 2 \cdot 2^{2s} \cdot 138 \cdot 800 \cdot C(z,D,r)$. Since $2^s \geq 2\cdot 138 \cdot 800$, we have $C(z,2^{js},2^i) \leq 2^{3s} C(z,D,r)$.

Let $t$ be the largest integer such that $t \leq i$ and  $A[t,j]$ is a dot in column $j$ of the matrix $A$. This is the special dot with respect to parameters $D$ and $r$.  Call this dot $S$. Since any segment in the column $j$ can have at most $\lceil \frac{js}{c} \rceil$ entries, we have $i \leq t+ \lceil\frac{js}{c}\rceil$. The cost $G(S)$ of filling the dot $S$ is at most $2\cdot 138\cdot (\frac{(2^{js})^2}{2^z 2^t}+2^{js})$. Since $t \geq i-\lceil\frac{js}{c}\rceil \geq i-\frac{js}{c}-1$, we have $G(S) \leq 2 \cdot 138\cdot (\frac{(2^{js})^2}{2^z 2^{(i-\frac{js}{c}-1)}}+2^{js}) \leq 2 \cdot 138\cdot 2 \cdot 2 ^{\frac{js}{c}}(\frac{(2^{js})^2}{2^z 2^i}+2^{js})$. Since $C(z,2^{js},2^i) \geq \frac{1}{800}(\frac{(2^{js})^2}{2^z  2^i} + 2^{js})$, we have $G(S) \leq 2 \cdot 2 \cdot 138\cdot 800 \cdot 2 ^{\frac{js}{c}}\cdot C(z,2^{js},2^i)$. Since $2^s \geq 2\cdot 2\cdot 138 \cdot 800$ and $C(z,2^{js},2^i) \leq 2^{3s} \cdot C(z,D,r)$, we have $G(S) \leq 2^{4s} \cdot C(z,D,r)\cdot 2 ^{\frac{js}{c}}$. Since $2^{js} \leq 2^s D$, we have $G(S) \leq 2^{4s} \cdot 2^{s/c}\cdot C(z,D,r)\cdot D^{1/c}$. Since $c = \lceil\frac{1}{\alpha}\rceil$, we have $1/c\leq{\alpha}$, and thus  $G(S) \leq 2^{4s} \cdot 2^{s/c}\cdot C(z,D,r)\cdot D^\alpha$. Hence $G(S) \leq 2^{5s}\cdot C(z,D,r)\cdot D^\alpha$. This proves the lemma. 
\end{proof}

We are now able to estimate the cost of Algorithm {\tt Medium vision}. Recall that this algorithm works with canonical advice of size $z\geq 0$
and uses as input an arbitrary positive real constant $\alpha$. This is the only knowledge available to the agent. We show that the cost of this algorithm approximates the cost of the optimal algorithm using advice of size $z$.

\begin{theorem}\label{medium-vision-cost}
Let $1 < r < 0.9D$, where the treasure is at distance at most $D$ from the initial position of the agent and $r$ is the vision radius.
 Algorithm {\tt Medium vision}, using advice of size $z$ and any positive real input $\alpha$, works  at cost ${\cal C}(\alpha ,z,D,r)\in O(C(z,D,r) D^{\alpha})$
 which is
 $O(OPT(z,D,r)^{1+\alpha})$, where 
  $OPT(z,D,r)$ is the cost of the optimal algorithm using advice of size~$z$.
\end{theorem}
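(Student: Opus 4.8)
The plan is to establish ${\cal C}(\alpha,z,D,r)\in O(C(z,D,r)D^\alpha)$ first, and then observe that this quantity is $O(OPT(z,D,r)^{1+\alpha})$; combining the two gives the theorem.

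For the first part, I would work with the \emph{special dot} $S=A[t,j]$ of Lemma \ref{special-dot}, where $j$ is the smallest integer with $2^{js}\geq D$, $i$ is the largest integer with $2^i\leq r$, and $t\leq i$ is the largest row index for which $A[t,j]$ is a dot of column $j$. Filling $S$ amounts to running Algorithm {\tt Basic Traversal} with $D'=2^{js}\geq D$ and $r'=2^t\leq 2^i\leq r$, so by the correctness part of Lemma \ref{cost-basic} the agent necessarily sees the treasure during that execution. I would then argue that $S$ is actually filled in some phase $p^*$: the budget of phase $p$ is $B(z,D_j,r_i)$ for the $p$th dot of $Thread_c$, and by Lemma \ref{telescope} the filling costs along any thread grow at least geometrically, so these budgets tend to infinity; since Algorithm {\tt Medium vision} fills the dots of each $Thread_k$ in order of increasing columns and $S$ belongs to one of these threads, $S$ is filled no later than the first phase whose budget reaches $2G(S)$. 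Hence the treasure is found by the end of phase $p^*$, and ${\cal C}(\alpha,z,D,r)$ is at most the total cost of Algorithm {\tt Medium vision} through the end of phase $p^*$. By Lemma \ref{phase-cost} this is at most $2c\cdot 2^{2s}G(S)$, and by Lemma \ref{special-dot} $G(S)\leq 2^{5s}C(z,D,r)D^\alpha$; therefore ${\cal C}(\alpha,z,D,r)\leq 2c\cdot 2^{7s}C(z,D,r)D^\alpha$, which is $O(C(z,D,r)D^\alpha)$ since $s=20$ is an absolute constant and $c=\lceil 1/\alpha\rceil$ depends only on the fixed input $\alpha$.

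For the second part, I would bound $C(z,D,r)$ and $D$ in terms of $OPT(z,D,r)$. By Lemma \ref{cost-basic}, $C(z,D,r)\leq 138\cdot(\frac{D^2}{2^zr}+D)$, while Lemma \ref{lb-medium} (applicable because $r<0.9D$) gives $OPT(z,D,r)\geq \frac{1}{800}(\frac{D^2}{2^zr}+D)\geq \frac{D}{800}$. Hence $C(z,D,r)\leq 138\cdot 800\cdot OPT(z,D,r)$ and $D^\alpha\leq 800^\alpha OPT(z,D,r)^\alpha$, so $C(z,D,r)D^\alpha\leq 138\cdot 800^{1+\alpha}\,OPT(z,D,r)^{1+\alpha}$. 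Combined with the first part, ${\cal C}(\alpha,z,D,r)\in O(OPT(z,D,r)^{1+\alpha})$, with a constant depending only on $\alpha$.

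I expect the main point requiring care to be the interplay between correctness and the phase structure: one must be sure that the agent has genuinely seen the treasure by the end of phase $p^*$ (which follows from Lemma \ref{cost-basic} because the special dot uses an overestimate of $D$ and an underestimate of $r$) so that the per-phase cost bound of Lemma \ref{phase-cost} can be invoked for that particular phase. A secondary point is that $S$ must lie in a genuine thread, i.e. its column $j$ satisfies $j\geq c/s$; this holds for all $D$ above a threshold depending only on $\alpha$, and the finitely many smaller values of $D$ are absorbed into the multiplicative constant. Beyond these, the argument is just substitution of the estimates already proved in Lemmas \ref{cost-basic}, \ref{lb-medium}, \ref{telescope}, \ref{phase-cost} and \ref{special-dot}, plus $D^{1/c}\leq D^\alpha$ for $D\geq 1$.
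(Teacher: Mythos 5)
Your proposal is correct and follows essentially the same route as the paper: it combines Lemmas \ref{telescope}, \ref{phase-cost} and \ref{special-dot} to get ${\cal C}(\alpha,z,D,r)=O(C(z,D,r)D^{\alpha})$, and then uses Lemmas \ref{cost-basic} and \ref{lb-medium} to convert this into $O(OPT(z,D,r)^{1+\alpha})$, exactly as the paper does (your explicit justification that the special dot is eventually filled and that the treasure is seen by then is detail the paper delegates to its correctness lemma).
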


\begin{proof}
In view of Lemmas \ref{telescope}, \ref{phase-cost} and \ref{special-dot}, we have
${\cal C}(\alpha ,z,D,r)\leq \lceil 1/{\alpha}\rceil \cdot 2^{141} \cdot C(z,D,r) D^{\alpha}$.
In view of Lemma \ref{cost-basic}, we have $C(z,D,r) \leq 138\cdot(\frac{D^2}{2^zr}+D)$.
By Lemma \ref{lb-medium}, we have $OPT(z,D,r) \geq \frac{1}{800}(\frac{D^2}{2^zr}+D)$.
Hence $C(z,D,r)D^{\alpha}\leq 800\cdot 138 \cdot OPT(z,D,r) \cdot D^{\alpha}$. 
Since $OPT(z,D,r) \geq \frac{1}{800}D$, we have $C(z,D,r)D^{\alpha}\leq 800\cdot 138 \cdot OPT(z,D,r) \cdot (800\cdot OPT(z,D,r))^{\alpha}$.
Hence ${\cal C}(\alpha ,z,D,r)\in O(OPT(z,D,r)^{1+\alpha})$.
\end{proof}

\section{Large vision radius}

The obvious lower bound on the cost of  treasure hunt is $D-r$, as the treasure can be at distance exactly $D$ from the initial position of the agent.
Notice that $D-r$ can be much smaller than $D$, e.g., in the case when $r=D-\log D$.
In this section we consider the case of large $r$, more precisely when $r \geq 0.9D$. In this case we will design a treasure hunt algorithm working, without any advice, at cost $O(D-r)$, and hence optimal
(up to multiplicative constants).

Let $L_i$, for $i=0,\dots, 12$, be the half-lines starting at the initial position $P$ of the agent and forming angle $\pi i/12$ with direction North, counterclockwise from this direction. Thus $L_0=L_{12}$ is in direction North.  Let $S_i$, for $i=0,\dots, 11$,  be the sector between lines $L_i$ and $L_{i+1}$. Our algorithm can be formulated as follows. It is interrupted when the agent gets at distance $r$ from the treasure.

\begin{center}
\fbox{
\begin{minipage}{10cm}

\noindent
{\bf Algorithm} {\tt Large vision}

\noindent
$j:=1$\\
{\bf repeat}\\
\hspace*{1cm}{\bf for} $i:=0$ {\bf  to} 11 {\bf do}\\
\hspace*{2cm}go along line $L_i$ at distance $2^j$ and go back to $P$\\ 
\hspace*{1cm}$j:= j+1$
\end{minipage}
}
\end{center}

In the analysis of the algorithm, we will use the following technical lemma.

\begin{lemma}\label{bound}
Consider reals $D,r$ such that $0.9D \leq r < D$. Let $PR$ be a line segment of length $D$ and let $S$ be a point in this segment such that $|PS|=D-r$ and $|SR|=r$. Let $L$ be a half-line starting from point $P$, forming an angle $\pi/6$ with the segment $PR$. Let $Q$ be the point in $L$ closest to $P$
such that $|QR|=r$. Then $|PQ| \leq 1.2 \cdot |PS|$.
\end{lemma}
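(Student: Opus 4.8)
The plan is to place $P$ at the origin with $R$ at distance $D$ along a convenient axis, so that $S$ sits on segment $PR$ with $|PS| = D - r$. The point $Q$ lies on the half-line $L$ making angle $\pi/6$ with $PR$, and is characterized as the closest point of $L$ to $P$ with $|QR| = r$. Writing $|PQ| = x$, the law of cosines in triangle $PQR$ gives $r^2 = |QR|^2 = x^2 + D^2 - 2xD\cos(\pi/6) = x^2 + D^2 - \sqrt{3}\,xD$. So $x$ is the smaller root of the quadratic $x^2 - \sqrt{3} D x + (D^2 - r^2) = 0$, namely
\[
x = \frac{\sqrt{3}D - \sqrt{3D^2 - 4(D^2 - r^2)}}{2} = \frac{\sqrt{3}D - \sqrt{4r^2 - D^2}}{2}.
\]
(I should check $4r^2 - D^2 \ge 0$: since $r \ge 0.9D$ we have $4r^2 \ge 3.24 D^2 > D^2$, so the square root is real, and since $r < D$ one checks the discriminant $3D^2 - 4(D^2-r^2) = 4r^2 - D^2 < 3D^2$, so the smaller root is positive and $Q$ exists as claimed.)

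**Reducing the inequality to a one-variable estimate.** The claim $|PQ| \le 1.2\,|PS|$ becomes
\[
\frac{\sqrt{3}D - \sqrt{4r^2 - D^2}}{2} \le 1.2 (D - r).
\]
I would normalize by setting $t = r/D \in [0.9, 1)$ and divide through by $D$, reducing everything to showing
\[
f(t) := \frac{\sqrt{3} - \sqrt{4t^2 - 1}}{2} \le 1.2(1 - t) \qquad \text{for all } t \in [0.9, 1).
\]
This is now a purely numerical one-variable inequality. The natural route is to rearrange to $\sqrt{3} - 2.4(1-t) \le \sqrt{4t^2-1}$; since the right side is positive, and since for $t$ near $1$ the left side $\sqrt 3 - 2.4(1-t)$ is positive (at $t=1$ it equals $\sqrt 3 \approx 1.732$, at $t = 0.9$ it equals $\sqrt 3 - 0.24 \approx 1.492 > 0$), I can square both sides to get an equivalent polynomial inequality in $t$, namely $(\sqrt 3 - 2.4(1-t))^2 \le 4t^2 - 1$. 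Expanding gives a quadratic in $t$ which I would then verify is nonpositive (after moving everything to one side) on $[0.9,1]$ by checking it at the endpoints and confirming convexity/concavity or checking the vertex — a routine finite computation.

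**Main obstacle and safeguards.** The only real subtlety is the direction of the inequality when squaring: I must confirm the left-hand side $\sqrt 3 - 2.4(1-t)$ stays nonnegative throughout $[0.9,1)$ before squaring, which I noted above holds since $2.4 \cdot 0.1 = 0.24 < \sqrt 3$. A secondary point is making sure $Q$ is well-defined and that we take the correct (nearer) intersection — this is why the formula uses the minus sign in front of the square root, and positivity of that root follows from $4r^2 - D^2 < 3D^2$, i.e. $r < D$. Apart from these checks, the argument is a short explicit estimate; the constant $1.2$ is comfortably loose (the true supremum ratio over $[0.9,1)$ is strictly below $1.2$), so the squared polynomial inequality should hold with slack, making the endpoint/vertex verification straightforward. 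I would present the law-of-cosines setup, the explicit formula for $|PQ|$, the normalization to $f(t) \le 1.2(1-t)$, and then the squaring step with its sign justification, concluding with the short polynomial check.
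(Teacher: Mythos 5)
Your proof is correct, but it takes a genuinely different route from the paper's. The paper argues synthetically: it applies the sine rule in triangle $PQR$ to bound the angle $\angle PQR=\beta+\gamma$ from below (using the fact that $Q$ is the \emph{nearer} intersection, so $\beta+\gamma>\pi/2$), then exploits the isosceles triangle $SQR$ (since $|SR|=|QR|=r$) to chase angles and conclude $\beta=\angle PQS\geq 1$, and finally applies the sine rule in triangle $PQS$ to get $|PQ|/|PS|=\sin\delta/\sin\beta\leq 1/\sin 1\leq 1.2$. You instead solve for $|PQ|$ in closed form via the law of cosines, obtaining $|PQ|=\tfrac12\bigl(\sqrt3 D-\sqrt{4r^2-D^2}\bigr)$ (with the minus sign correctly justified by the ``closest point'' clause), normalize by $t=r/D$, and reduce to a one-variable inequality that you square after verifying the sign condition. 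I checked the deferred polynomial step: with $u=1-t\in(0,0.1]$ the squared inequality reduces to $1.76\,u\leq 4.8\sqrt3-8\approx 0.314$, which holds with room to spare, so your plan closes. Your approach is more mechanical and self-contained --- it yields the exact value of $|PQ|$ and, in principle, the sharp constant ($\sup_{t\in[0.9,1)}$ of the ratio is about $1.177$), and it avoids the paper's somewhat delicate angle bookkeeping (e.g.\ justifying $\sin\beta\geq\sin 1$ requires locating $\beta$ in the right interval). The paper's argument, in exchange, is coordinate-free and reuses the same isosceles-triangle observation (its Lemma \ref{geom}) that appears elsewhere in Section 6. Both the role of $r\geq 0.9D$ (making the discriminant positive and the ratio small enough) and of $r<D$ (positivity of the smaller root) are correctly identified in your write-up.
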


\begin{figure}[tp]
\centering
\includegraphics[scale=1.0]{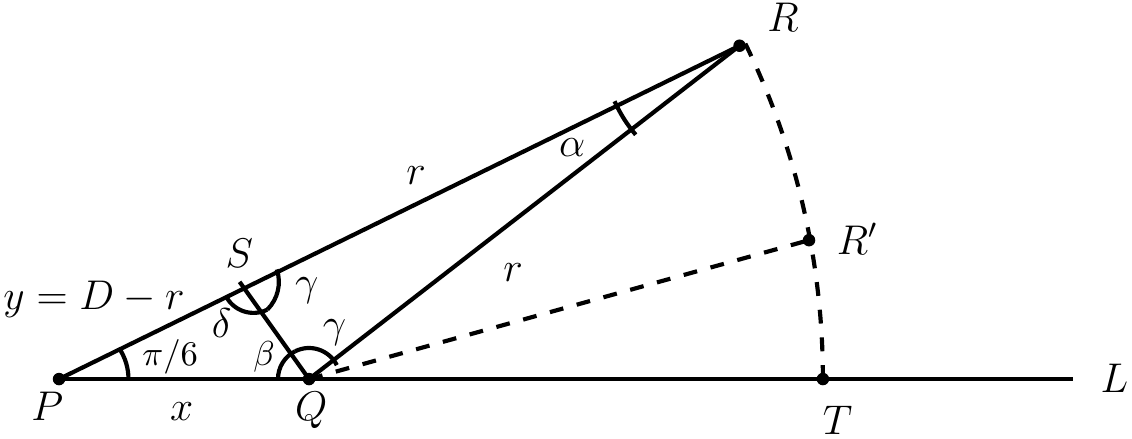}
\caption{Illustration of the proof of Lemma \ref{bound} }
\label{long-range}
\end{figure}

\begin{proof}
Let $|PQ|=x$ and $|PS|=y$. Denote the angle $\angle PSQ$ by  $\delta$, the angle $\angle SQR$ by $\gamma$, the angle $\angle QRS$ by $\alpha$ and the angle $\angle PQS$ by $\beta$, see Fig. \ref{long-range}. 
 
By the sine rule applied to the triangle $PQR$, we have $\frac{\sin (\pi/6)}{\sin (\beta+\gamma)}=\frac{r}{D}$. Since $r/D \geq 0.9$, we have $\sin(\beta+\gamma) \leq \sin (\pi/6)/0.9 \leq 0.56$. We have arcsin 0.56 = 0.594.
Let $A=(\pi-0.594) \geq 2.54$. Since $Q$ is the point in $L$ closest to $P$ such that $|QR|=r$, we have $(\beta+\gamma) > \pi/2$. Thus, by definition of $A$ we have $(\beta+\gamma) \geq A$. Using the triangle $PQR$, we have $\alpha = \pi-\pi/6-(\beta+\gamma) \leq \frac{5\pi}{6}-A$. Since the triangle $SQR$ is isoceles, we have $\gamma=(\pi-\alpha)/2$. Since $\alpha \leq \frac{5\pi}{6}-A $, we have $\gamma \geq \frac{\pi}{12}+\frac{A}{2}$. By definition, we have $\delta = \pi-\gamma$. Since $\gamma \geq \frac{\pi}{12}+\frac{A}{2}$, we have $\delta \leq \frac{11 \pi}{12}-\frac{A}{2}$. Using the triangle $PQS$, we have $\beta = (\pi-\frac{\pi}{6}-\delta)=\frac{5\pi}{6}-\delta$. Since $\delta \leq \frac{11 \pi}{12}-\frac{A}{2}$, we have $\beta \geq \frac{A}{2}-\frac{\pi}{12} \geq \frac{2.54}{2}-\frac{\pi}{12}\geq 1$.

By the sine rule applied to the triangle $PQS$, we $\frac{x}{y}=\frac{\sin \delta}{\sin \beta} \leq \frac{1}{\sin\beta}$. Since $\pi/2\geq \beta \geq 1$, we have 
$\sin \beta \geq \sin 1$ and thus
$\frac{x}{y}\leq \frac{1}{\sin 1} \leq 1.2$. This proves the lemma.
 \end{proof}
 
 We are now able to prove the correctness and estimate the cost of Algorithm {\tt Large vision}.
 
 \begin{theorem}\label{large vision}
 Suppose that the treasure is at distance at most $D$ from the initial position of the agent and that the vision radius $r$ satisfies $D>r \geq 0.9D$, where parameters $D$ and $r$ are unknown to the agent. Then an agent executing 
 Algorithm {\tt Large vision} finds the treasure at cost $O(D-r)$.
 \end{theorem}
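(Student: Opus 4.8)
The plan is to show that Algorithm \texttt{Large vision} both sees the treasure within a bounded number of iterations of the \textbf{repeat} loop and that the cumulative length of the trajectory up to that point is $O(D-r)$. First I would fix the (unknown) parameters $D$ and $r$ with $0.9D \le r < D$, and let $R$ denote the actual location of the treasure, so $|PR|\le D$. Writing $R$ in one of the twelve sectors $S_i$, there is a half-line $L_i$ among $L_0,\dots,L_{11}$ such that the angle $\angle(PR, L_i)$ is at most $\pi/12 < \pi/6$. I would then argue that the agent sees the treasure as soon as it has travelled far enough along this particular $L_i$: concretely, let $Q$ be the point on $L_i$ closest to $P$ with $|QR|=r$ (such a point exists because $|PR|\le D$ and the function measuring distance from $R$ along $L_i$ starts at $|PR|$ and increases, so it crosses $r$ once the agent is past $P$; one needs the easy observation that the minimal distance from $R$ to $L_i$ is at most $|PR|\sin(\pi/12) < r$ since $r \ge 0.9D \ge 0.9|PR|$). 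Thus the agent sees the treasure by the time it reaches $Q$ on line $L_i$.

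Next I would bound $|PQ|$. This is exactly what Lemma \ref{bound} delivers, but that lemma is stated for the angle $\pi/6$ and for a point $R$ at distance \emph{exactly} $D$ from $P$, with $S$ the point on segment $PR$ at distance $D-r$ from $P$. I would reduce to that configuration: since the relevant angle is at most $\pi/12 \le \pi/6$, moving the half-line to angle exactly $\pi/6$ only increases $|PQ|$ (monotonicity in the angle — I would spell this out or invoke the same sine-rule computation), and replacing $|PR|$ by $D$ (its upper bound) likewise is the worst case. Then Lemma \ref{bound} gives $|PQ| \le 1.2\,(D-r)$, so the agent sees the treasure after travelling at most $1.2(D-r)$ along $L_i$.

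Finally I would account for the total cost. Let $j^\star$ be the smallest integer with $2^{j^\star} \ge 1.2(D-r)$, so $2^{j^\star} < 2.4(D-r)$. The agent certainly sees the treasure during iteration $j^\star$ of the \textbf{repeat} loop (when it walks distance $2^{j^\star}$ along $L_i$), if not earlier. The cost incurred through the end of iteration $j^\star$ is at most $\sum_{j=1}^{j^\star} 12 \cdot 2 \cdot 2^j \le 24 \cdot 2^{j^\star+1} < 96\cdot 2^{j^\star} < 230\,(D-r)$, using the geometric-series (telescoping) bound. Hence the total cost is $O(D-r)$, which proves the theorem. The main obstacle is the first step: carefully justifying that the point $Q$ exists on $L_i$ and that seeing the treasure really happens by the time the agent reaches $Q$ (rather than possibly requiring the agent to overshoot), and then matching the hypotheses of Lemma \ref{bound} — in particular handling the angle $\le \pi/12$ versus the stated $\pi/6$, and the distance $\le D$ versus exactly $D$ — by a monotonicity argument. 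Everything after that is a routine geometric-series estimate.
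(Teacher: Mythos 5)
Your proposal is correct in outline and reaches the same cost bound by the same doubling/geometric-series mechanism, but the key geometric step is handled by a genuinely different route. The paper fixes the sector $S_i$ containing the treasure, applies Lemma \ref{bound} to the \emph{corner} $R$ of that sector (at distance exactly $D$, angle exactly $\pi/6$), and then shows that from the resulting point $Q$ on $L_i$ the agent sees \emph{every} point of the truncated sector $S_i^*$: the triangle $PQR$ lies in the disc of radius $r$ about $Q$ by convexity, and the arc points are handled by the isosceles-triangle Lemma \ref{geom}. You instead apply the lemma directly to the treasure's actual location, which forces you to relax its hypotheses from ``distance exactly $D$, angle exactly $\pi/6$'' to ``distance $\le D$, angle $\le \pi/6$'' via a monotonicity claim that you only sketch. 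That claim is the one nontrivial thing you owe: writing $|PQ|=d\cos\theta-\sqrt{r^2-d^2\sin^2\theta}$ for treasure distance $d$ and angle $\theta$, one checks $\partial|PQ|/\partial d>0$ always and $\partial|PQ|/\partial\theta>0$ precisely when $d>r$ (the case $d\le r$ being trivial since the treasure is then visible from $P$), so the worst case is indeed $d=D$, $\theta=\pi/6$ and Lemma \ref{bound} applies. With that computation supplied, your argument is complete; it buys you independence from Lemma \ref{geom} and from the sector-covering step, at the price of a small calculus/sine-rule verification the paper avoids by always working with the extremal configuration. Your existence argument for $Q$ should also be phrased via the minimum of the distance function along $L_i$ (it first decreases, then increases) rather than saying the distance ``starts at $|PR|$ and increases,'' but your parenthetical already contains the correct fact that the minimum $|PR|\sin(\pi/12)$ is below $r$.
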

 
 \begin{proof}
 Suppose that the  treasure is at distance  at most $D$ from the initial position $P$ of the agent. Let $j_0=\lceil \log (1.2 \cdot (D-r)) \rceil$. We will show that the agent sees the treasure by the end of the execution of the {\bf repeat} loop for $j=j_0$. 
 
 Suppose that the treasure is in sector $S_i$, where $0\leq  i \leq 11$. Let ${S_i}^*$ be the set of points which is the intersection of the sector $S_i$ with the disc of radius $D$ centered at $P$.
 Let $R$ be the point on line $L_{i+1}$ at distance $D$ from $P$, and let  $Q$ be the point on line $L_i$, closest to $P$ such that $|RQ|=r$. By Lemma \ref{bound}, we have $|PQ| \leq 1.2 \cdot (D-r)$. 
 
We will show that the agent sees all the points in the set ${S_i}^*$, when it reaches the point $Q$. Let ${S_i}'$ be the set of points in the triangle $PQR$.  We first show that $|PQ| \leq r$. Since $r \geq 0.9D$, we have $D-r \leq 0.1D$, and hence $|PQ| \leq 1.2(D-r) \leq 1.2 \cdot 0.1 D \leq 0.12D < r$. Since the points $R$ and $P$ are at distance at most $r$ from the point $Q$, the triangle $PQR$ must be inside the circle of radius $r$ centered at point $Q$. Hence, the agent sees all the points in the set ${S_i}'$ from point $Q$. Next, we show that the agent can also see all the points in the set ${S_i}^*-{S_i}'$ from the point $Q$. 

Let $R'$ be any point on the arc $RT$, see Fig. \ref{long-range}. Let $|QR'|=x'$. Consider the triangle $PR'R$. Since $|PR|=|PR'|=D$, the triangle $PR'R$ is an isosceles triangle. In view of Lemma \ref{geom},
since $|QR|=r$, we have $r > x'$. Hence, from the point $Q$, the agent can see all the points in the arc $RT$, and hence also all the points in the set ${S_i}^*-{S_i}'$. It follows that when the agent reaches point $Q$, it sees all the points in  ${S_i}^*$, and thus it must see the treasure by this time. This proves
that the agent sees the treasure by the time it gets to point $Q$.

We finally show that the agent reaches the point $Q$ by the end of the execution of the {\bf repeat} loop for $j=j_0$ of Algorithm {\tt Large vision}.  
During this execution of the loop, the agent goes at distance $2^{j_0}= 2^{\lceil \log 1.2 (D-r)\rceil} \geq 1.2 (D-r)$ from point $P$ along line $L_i$, and hence, it reaches point $Q$ by the end of this execution.

We compute the cost of Algorithm {\tt Large vision} as follows.  The cost of the $j$th execution of the {\bf repeat} loop is $12\cdot 2 \cdot 2^j$.
Hence the cost of the first $j_0$ executions of the loop is at most $2\cdot 12\cdot 2 \cdot 2^j= 48\cdot 2^{j_0}$.
Since $2^{j_0} \leq 2.4 (D-r)$, the cost of our algorithm is at most $48\cdot 2^{j_0} \leq 48\cdot 2.4(D-r)< 116(D-r)$. This concludes the proof.  
  \end{proof}
  
  Theorem \ref{large vision} implies the following corollary.
  
  \begin{corollary}
  Suppose that the treasure is at distance at most $D$ from the initial position of the agent and the vision radius $r$ is  at least $0.9D$.
The cost of Algorithm {\tt Large vision} (working without any knowledge or advice) is $O(OPT(z,D,r))$, where $OPT(z,D,r)$ is the cost of the optimal algorithm using advice of any size $z$.
   \end{corollary}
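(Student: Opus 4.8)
The plan is to derive this corollary directly from Theorem~\ref{large vision} together with the elementary, advice-independent lower bound on treasure hunt cost. First I would observe that for \emph{any} size $z$ of advice, $OPT(z,D,r)\geq D-r$. Indeed, if the treasure is placed at distance exactly $D$ from the initial position $P$, then in order to get within the vision radius $r$ of it the agent must follow a trajectory of length at least $D-r$; this holds no matter what binary string the oracle provides, since the advice only affects the agent's input, not the geometry of the plane nor the fact that the agent "sees" the treasure only upon reaching distance $r$ from it. Hence $D-r$ is a universal lower bound on $OPT(z,D,r)$, valid for every $z\geq 0$ and not degrading as $z$ grows.

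Next I would invoke Theorem~\ref{large vision}, which asserts that in the regime $0.9D\leq r<D$ Algorithm {\tt Large vision}, using no advice whatsoever, finds the treasure at cost $O(D-r)$. Combining the two facts yields that the cost of Algorithm {\tt Large vision} is $O(D-r)\leq O(OPT(z,D,r))$ for every $z$, which is precisely the claimed statement.

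There is essentially no obstacle here: the substantive work has already been done in Theorem~\ref{large vision} (whose proof, through Lemmas~\ref{bound} and~\ref{geom}, is the geometric heart of the matter) and in the one-line distance argument above. The only point requiring a small amount of care is the phrasing of the comparison, since the corollary pits a single advice-free algorithm against the optimum taken over all advice sizes $z$; I would therefore state explicitly that the bound $OPT(z,D,r)\geq D-r$ is uniform in $z$, so that the chain $O(D-r)=O(OPT(z,D,r))$ is legitimate simultaneously for all $z$.
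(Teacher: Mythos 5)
Your proposal is correct and matches the paper's own (implicit) justification: the paper derives this corollary directly from Theorem~6.1 combined with the trivial lower bound $OPT(z,D,r)\geq D-r$ stated at the opening of Section~6, which is exactly your argument. Your added remark that the bound $D-r$ is uniform in $z$ is a sensible clarification the paper leaves unstated.
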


 \section{The universal algorithm}
 
 In the previous sections we presented three treasure hunt algorithms working in three different ranges of the value of vision radius $r$. However, using any of them requires knowing that the vision radius is in a given range. We need a universal treasure hunt algorithm that would work efficiently for any (unknown) values of $D$ and $r$. One way to  design such an algorithm from our previously constructed building blocks would be to reserve 2 bits of the advice to indicate in which of the three cases the agent is situated, and then use $z-2$ bits for the actual advice, as described before. This would work for any size of advice $z\geq 2$, as $z$ and $z-2$ are of the same order of magnitude, hence using $z-2$ instead of $z$ bits for the ``real'' advice would not change the complexity of the solution. However, in the case $z\leq 1$, which includes the important case of no advice whatsoever, this solution does not work. Hence we present an alternative way of ``merging'' our three algorithms into one universal algorithm working for any size $z \geq 0$ of advice. 
 
 Fix any non-negative integer $z$, and let 
  $\Sigma_1$, $\Sigma_2$, $\Sigma_3$ be the trajectories of the agent resulting from executing, respectively, Algorithm {\tt Small vision},
  Algorithm {\tt Medium vision} and Algorithm {\tt Large vision}, using the canonical advice of size $z$. In the case of   Algorithm {\tt Medium vision},
  we use an arbitrary fixed constant $\alpha >0$ as input of the algorithm, and the trajectory $\Sigma_2$ is obtained by executing this algorithm with this input.
  In the case of   Algorithm {\tt Large vision}, any advice should be ignored, as the algorithm works without it. Each of the trajectories $\Sigma_i$ is an infinite polygonal line in the plane, starting at the initial position $P$ of the agent.
  Using the obtained advice, the agent can compute each of the above trajectories to an arbitrary finite length.
  In each of the above three algorithms, the agent follows the respective trajectory until it finds the treasure.
  
  The idea of merging the three algorithms (without knowing which of the ranges of vision radius $r$ is the actual one) is to follow each of the three trajectories in a round-robin fashion, at  distances
  increasing exponentially, each time backtracking to point $P$. Below is the pseudocode of the algorithm using canonical advice of some non-negative size $z$. The algorithm has a positive real input $\alpha$, used in the construction of trajectory $\Sigma_2$.  As usual, the algorithm is interrupted when the
  agent sees the treasure.
  
  \begin{center}
\fbox{
\begin{minipage}{9cm}

{\bf Algorithm} {\tt Universal}\\

\noindent
$p:=1$\\
{\bf repeat}\\
 \hspace*{1cm}Go at distance $2^p$ along the trajectory $\Sigma_1$\\
 \hspace*{1cm}Backtrack to the initial position $P$\\
 \hspace*{1cm}Go at distance $2^p$ along the trajectory $\Sigma_2$\\
 \hspace*{1cm}Backtrack to initial position $P$\\
  \hspace*{1cm}Go at distance $2^p$ along the trajectory $\Sigma_3$\\
 \hspace*{1cm}Backtrack to initial position $P$\\
 \hspace*{1cm}$p:=p+1$\\
\end{minipage}
}
\end{center}

The following theorem establishes the correctness and estimates the cost of Algorithm {\tt Universal}.
For any size $z$ of advice and any $D$ and $r$, let $OPT(z,D,r)$ be the optimal cost of finding the treasure for parameters $z$, $D$ and $r$, if the agent has only an advice string of length $z$ as input.

\begin{theorem}
Fix any constant $\alpha >0$. 
For any size $z$ of advice and any $D$ and $r$,
Algorithm {\tt Universal} works with advice of size $z$ at cost $O(OPT(z,D,r))$ whenever $r\leq 1$ or $r\geq 0.9D$. For intermediate values of $r$, i.e., for $1<r<0.9D$, it works at cost $O(OPT(z,D,r)^{1+\alpha})$, where $\alpha$ is used as input 
to determine the trajectory $\Sigma_2$.
\begin{itemize}
\item
If $r\leq 1$, the cost of Algorithm {\tt Universal} is $O(D+\frac{D^2}{2^zr}(\log D + \log 1/r))$. 
\item
If $1<r<0.9D$, the cost of Algorithm {\tt Universal} is $O((D+\frac{D^2}{2^zr})D^{\alpha})$.
\item
If $r\geq 0.9D$, the cost of Algorithm {\tt Universal} is $O(D-r)$.
\end{itemize}
\end{theorem}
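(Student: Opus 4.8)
The plan is to prove correctness and cost separately, in both cases reducing the behaviour of Algorithm {\tt Universal} to that of its three building blocks by a standard doubling (telescoping) argument, and then simply substituting the costs already established for Algorithms {\tt Small vision}, {\tt Medium vision} and {\tt Large vision}.

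First, for correctness, I would observe that each $\Sigma_i$ is a fixed infinite polygonal line starting at $P$ that the agent can compute from the advice to any finite length, and that every $r<D$ falls into exactly one of the three ranges $r\le 1$, $1<r<0.9D$, $r\ge 0.9D$, handled respectively by Algorithms {\tt Small vision}, {\tt Medium vision}, {\tt Large vision}. By Theorems \ref{small-vision-cost}, \ref{medium-vision-cost} and \ref{large vision}, if $r$ lies in the $i$-th range then an agent that simply follows $\Sigma_i$ from $P$ sees the treasure after traversing a prefix of $\Sigma_i$ of some finite length $\ell_i$, and $\ell_i$ is precisely the cost of that algorithm, hence $O$ of the bound stated in the corresponding theorem. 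Since Algorithm {\tt Universal} follows $\Sigma_i$ out to distance $2^p$ in phase $p$, in the first phase $p$ with $2^p\ge \ell_i$ the agent retraces this prefix and therefore sees the treasure; this establishes correctness in all three ranges.

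Second, for the cost, the cost of phase $p$ of Algorithm {\tt Universal} is $6\cdot 2^p$ (three trajectories, each followed out to distance $2^p$ and backtracked, at cost $2\cdot 2^p$ each), so the cost of phases $1,\dots,p_0$ is $\sum_{p=1}^{p_0}6\cdot 2^p<12\cdot 2^{p_0}$. If $r$ is in the $i$-th range, the treasure is found by the end of phase $p_0:=\lceil\log \ell_i\rceil$, so $2^{p_0}<2\ell_i$ and the total cost of Algorithm {\tt Universal} is $O(\ell_i)$. It then remains to substitute the three values of $\ell_i$. For $r\le 1$, Theorem \ref{small-vision-cost} gives $\ell_1=O(D+\frac{D^2}{2^zr}(\log D+\log 1/r))$, which is $O(OPT(z,D,r))$ by Theorem \ref{small-lb}. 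For $1<r<0.9D$, Theorem \ref{medium-vision-cost} together with Lemma \ref{cost-basic} gives $\ell_2\in O(C(z,D,r)D^{\alpha})=O((D+\frac{D^2}{2^zr})D^{\alpha})$, which that same theorem identifies with $O(OPT(z,D,r)^{1+\alpha})$. For $r\ge 0.9D$, Theorem \ref{large vision} gives $\ell_3=O(D-r)$, which is optimal since $D-r$ is a trivial lower bound (and matches $OPT$ for any $z$ because {\tt Large vision} uses no advice). These substitutions yield exactly the three displayed complexities and the claimed $OPT$ bounds.

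The argument has no genuine obstacle; the only point requiring a little care is to make explicit that "the cost at which a building-block algorithm finds the treasure'' coincides with "the length of the prefix of $\Sigma_i$ after which the treasure is seen'', so that the round-robin and backtracking of Algorithm {\tt Universal} cost only a constant factor over each $\ell_i$ — this is immediate from the definition of cost as trajectory length and from the fact that the three trajectories are fixed infinite polygonal lines. One should also note that in the range $r\ge 0.9D$ the bound $O(D-r)$ holds regardless of $z$, and that the trivial lower bound $D-r$ together with Lemma \ref{lb-medium} (or Theorem \ref{small-lb}) supplies the matching $OPT$ statements at the two extremes.
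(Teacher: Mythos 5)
Your proposal is correct and follows essentially the same argument as the paper: the paper likewise sets $p_0=\lceil\log x\rceil$ where $x$ is the cost of the relevant building-block algorithm, notes that the treasure is found by the end of phase $p_0$, bounds the total cost of phases $1,\dots,p_0$ by $12\cdot 2^{p_0}\leq 24x$, and then invokes Theorems \ref{small-vision-cost}, \ref{medium-vision-cost} and \ref{large vision}. Your additional remark that the building-block cost coincides with the length of the prefix of $\Sigma_i$ traversed before seeing the treasure is exactly the (implicit) justification the paper relies on.
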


\begin{proof}
Let $x$ be the cost of finding the treasure
using Algorithm {\tt Small vision} if $r\leq 1$, using Algorithm {\tt Medium vision} if $1<r<0.9D$, and using Algorithm {\tt Large vision} if $r\geq 0.9D$.
Let $p_0=\lceil \log x \rceil$. Hence $2^{p_0}\leq 2x$. Regardless of the range to which $r$ belongs, the treasure will be found at the latest by the end  of the
execution of the {\bf repeat} loop for $p=p_0$. The total cost of all the executions of the loop for $p\leq p_0$ is at most $12\cdot 2^{p_0} \leq24x$.
Hence the theorem follows from Theorems \ref{small-vision-cost}, \ref{medium-vision-cost} and \ref{large vision}.
\end{proof}

\section{Conclusion}

We designed a treasure hunt algorithm that works for any size of advice and has optimal cost (up to multiplicative constants) whenever $r\leq 1$ or $r\geq 0.9D$, where $r$ is the vision radius and $D$ is an upper bound on the distance between the treasure and the initial position of the agent. For the
intermediate range of vision radius, i.e., when $1<r<0.9D$, our algorithm has almost optimal cost. Finding an algorithm of optimal cost, in this range as well, is a
natural open problem.

As a by-product of our result we can obtain the solution of a natural  related problem of treasure hunt by many agents, without advice. Suppose that $k$ agents, with distinct labels $1,\dots, k$, are collocated at a point $P$ of the plane and have to find a treasure located at an unknown point $Q$ of the plane. $D$ is an upper
bound on the distance from $P$ to $Q$ and $r$ is the vision radius of each agent, but each agent knows only $k$ and its own label, in particular they do not know $D$ and $r$. The treasure is found when some agent gets at distance $r$ from it. The efficiency measure of a treasure hunt algorithm with many agents is the {\em time} of finding the treasure, assuming that agents start simultaneously and walk with constant speed normalized to 1. 

We can use our algorithm to solve this problem as follows. Let $z=\lfloor \log k \rfloor$.
Partition the plane into $2^z$ sectors corresponding to angles of size $2\pi/2^z$, starting from direction North and going counterclockwise.
Call these sectors $S_i$, for $1\leq i \leq 2^z$.
Agent with label $i$, for $1\leq i \leq 2^z$, executes Algorithm {\tt Universal}, supposing that it decoded sector $S_i$ using advice of size $z$. Agents with labels $i$, for $2^z < i \leq k$, remain idle. It follows from our results that the time of finding the treasure is $O(D+\frac{D^2}{kr}(\log D + \log 1/r))$ when $r\leq 1$, it is $O((D+\frac{D^2}{kr})D^{\alpha})$ when $1<r<0.9D$ and the input to the algorithm is $\alpha$, and it is $O(D-r)$ when $r\geq 0.9D$. It also follows that this time is optimal (respectively almost optimal) for treasure hunt by $k$
collocated agents, in the same sense and for the same reasons as understood in this paper.

\end{document}